\documentclass[12pt]{article}

\usepackage{amstext,amsmath,amssymb,amsfonts,bbm}
\usepackage[latin1]{inputenc}
\usepackage{hyperref}
\usepackage{amsthm}
\usepackage{color}
\usepackage{graphicx}
\usepackage{bbold}

\usepackage{mathtools}
\DeclarePairedDelimiter\ceil{\lceil}{\rceil}

\theoremstyle{plain}

\newtheorem{lemma}{Lemma}
\newtheorem{theorem}{Theorem}
\newtheorem{proposition}{Proposition}
\theoremstyle{definition}

\DeclareMathOperator{\Tr}{Tr}

\begin{document}

\newcommand{\al}{\alpha}
\newcommand{\lmbd}{\lambda}
\newcommand{\tht}{\theta}
\newcommand{\Norm}{\Big\|}
\newcommand{\myI}{\mathrm{i}}

\newcommand{\cA}{{\cal A}}
\newcommand{\cZ}{{\cal Z}}
\newcommand{\cC}{{\cal C}}

\title{Variational Loop Vertex Expansion}

\author{Vasily Sazonov\footnote{vasily.sazonov@cea.fr}, \\Universit\'e  Paris-Saclay,  CEA,  List,\\  \textit{F-91120,  Palaiseau,  France}}

\maketitle

\begin{abstract}
Loop Vertex Expansion (LVE) was developed to construct QFT models with local and non-local interactions. Using LVE, one can prove the analyticity in the finite cardioid-like domain in the complex plain of the coupling constant of the free energies and cumulants of various vector, matrix, or tensor-type models. Here, applying the idea of choosing the initial approximation depending on the coupling constant, we construct the analytic continuation of the free energy of the quartic matrix model beyond the standard LVE cardioid over the branch cut and for arbitrary large couplings.
\end{abstract}

\section{Introduction}
The present work stems from the constructive field theory method of the Loop Vertex Expansion (LVE) and ideas of the variational perturbation theory.
The concept of LVE was first introduced in \cite{Rivasseau:2007fr} as a constructive approach for quartic matrix models aimed to provide bounds that are uniform in the matrix size. In its original form, LVE combines an intermediate field representation with replica fields and a forest formula \cite{BK, AR1} to express the free energy of the theory through a convergent sum over trees. Unlike conventional constructive methods, this loop vertex expansion does not rely on cluster expansions and does not entail conditions related to small/large field considerations.

Like Feynman's perturbative expansion, the LVE provides a straightforward way to calculate connected quantities. In this method, the theory's partition function is represented as a sum over forests, and its logarithm is essentially the same sum but constrained to connected forests -- trees. This property arises from the fact that the amplitudes factorize over the connected components of the forest.
The functional integrands associated with each forest or tree exhibit absolute and \emph{uniform} convergence for all field values.
Together with the non-proliferation of trees (in comparison to Feynman diagrams) this leads to the convergence of the LVE in the 'pacman'-like or cardioid-like domains, see Fig. \ref{F1}.
\begin{figure}[!ht]
\begin{center}
{\includegraphics[width=14cm]{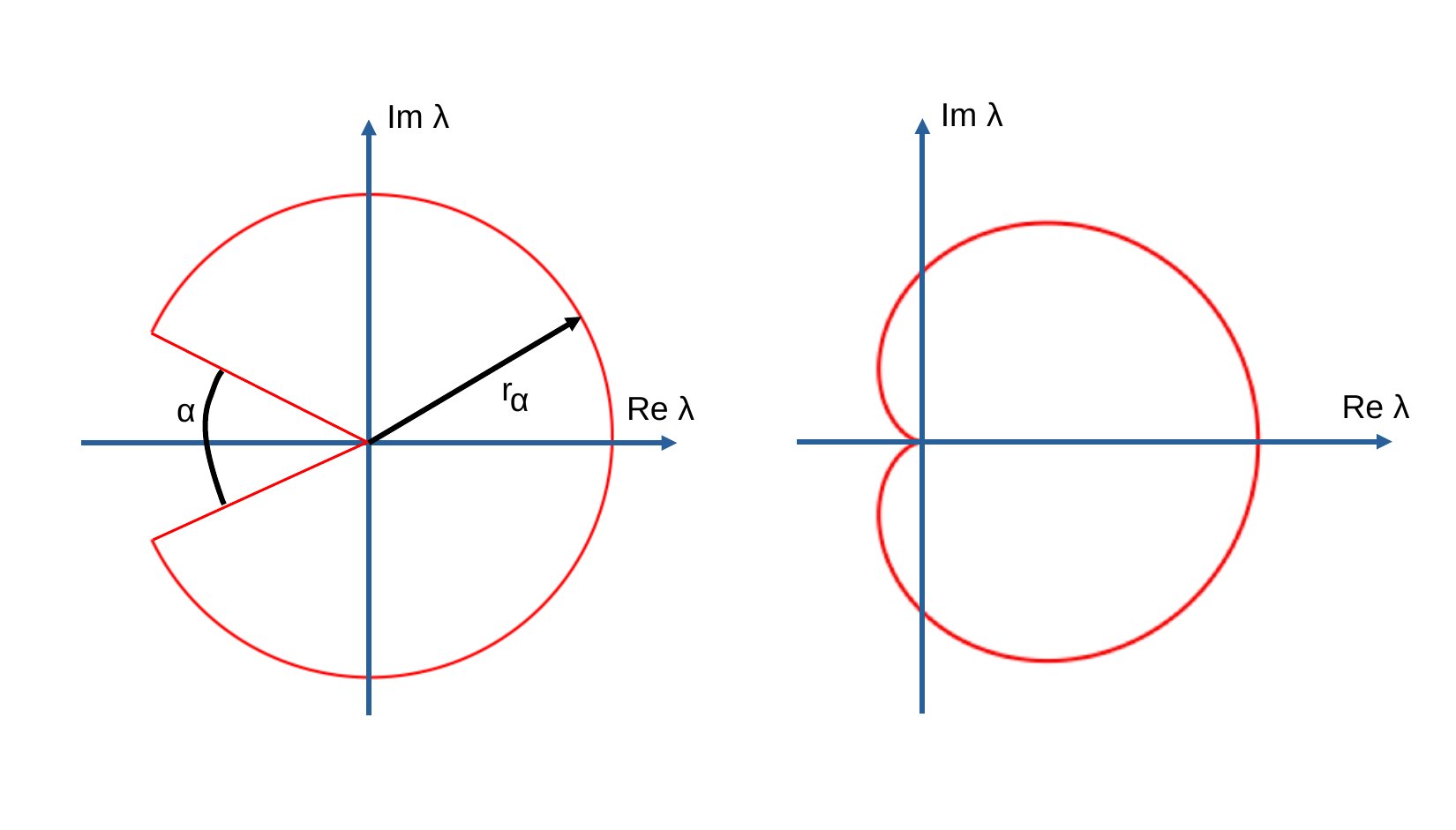}}
\end{center}
\caption{The left-hand side represents the typical LVE 'pacman'-type domain of analyticity with the radius $r_\al$ shrinking with decreasing of the angle $\al$. On the right, we present the cardioid domain, which can be understood as a union of 'pacman'-type domains.}
\label{F1}
\end{figure}
The convergence of the LVE implies the Borel summability of the standard perturbation series \cite{Sok}, and the LVE directly computes the Borel sum.
Essentially, the loop vertex expansion performs an \emph{explicit repacking} of infinitely many subsets of Feynman amplitude components, leading to a convergent expansion rather than a divergent one \cite{Rivasseau:2013ova}.

In the context of combinatorial field theories involving matrices and tensors \cite{Gurau:2011xp,Guraubook}, appropriately rescaled to possess a non-trivial $N \to \infty$ limit \cite{Hooft,Gurau:2010ba,Gurau:2011aq,Gurau:2011xq}, the Borel summability achieved by the LVE is \emph{uniform} with respect to the model's size $N$,
\cite{Rivasseau:2007fr,Gurau:2014lua,Gurau:2013pca,Delepouve:2014bma}.
The LVE method is extendable to ordinary field theories with cutoffs, as discussed in \cite{MR1}. An adapted multiscale version, known as MLVE \cite{Gurau:2013oqa}, incorporates renormalization techniques \cite{Delepouve:2014hfa,Lahoche:2015zya,Rivasseau:2017xbk,Rivasseau:2016rgt,Rivasseau:2014bya}, although it's worth noting that models developed so far are limited to the superrenormalizable type. The MLVE is particularly effective in resumming renormalized series for non-local field theories of matrix or tensorial types. 
Originally developed for the quartic interactions, the LVE method was then generalized to higher-order interactions under the name of the Loop Vertex Representation \cite{Rivasseau:2017hpg, Krajewski1, Krajewski2}.

The main principle of the Variational Perturbation Theory (VPT) is the utilization of the initial approximation chosen depending on the coupling constant and/or on the order of the perturbative expansion, or optimized depending on these parameters. Such ideas of the VPT were applied under different names: Variational Perturbation Theory \cite{Feynman, Kleinert1}, Optimized Perturbation Theory \cite{Stevenson}, Convergent Perturbation Theory \cite{Ushveridze}, Delta Expansion \cite{Guida}, etc.,  -- with the different levels of rigor to a various range of physical problems. The applications of VPT include:
energy levels of quantum anharmonic oscillator \cite{Caswell, Halliday, Guida};
critical indices in scalar quantum field theories \cite{KleinertCrit};
optimization of the QCD perturbative computations \cite{Stevenson, Stevenson2};
computations in lattice models with real and complex scalar actions \cite{Ivanov1, Ivanov2, ComplexAction};
coefficients of the $\frac{1}{D}$ expansion in the vector model \cite{largeD}. VPT was used for constructing strong coupling expansions \cite{strong}. It was shown that VPT (Delta Expansion) should be applicable in cases where the standard perturbation theory is non-Borel summable \cite{Guida}. It is worth noting that the proof of the convergence of the VPT (Delta Expansion) for the double-well case of the anharmonic oscillator is based on the analyticity of its energy levels \cite{Simon} in the area of the coupling constant Riemann surface, which is larger than the one required for the Borel summability in the single-well potential case.

In the current work, we combine the LVE method with the VPT idea of the initial approximation depending on the coupling constant. Taking the quartic matrix model as an example, we construct LVE with the modified initial approximation depending on the coupling constant and prove the convergence of the corresponding series for the free energy of the model for arbitrary large 
coupling constants with the arguments $-\frac{3\pi}{2} < \phi < \frac{3\pi}{2}$. 
The latter result is not optimal and might be improved up to  $-2\pi + \epsilon < \phi < 2\pi - \epsilon$, where $\epsilon > 0$.
This and the extension of the current results to cumulants and a constructive version of the $1/N$ expansion are additional outcomes of the method, left for further exploration. We also aim to investigate connections with the resurgence theory in the spirit of the recently carried out analysis of the vector model \cite{Benedetti}.

\section{Statement of the result}
To illustrate the main advantages of the Loop Vertex Expansion modified by the initial approximation depending on the coupling, we study a quartic matrix model. The partition function of this model 
is defined by
\begin{eqnarray}
\cZ[\lmbd,N]=
\int dM\exp\Big\{
-\Tr(MM^{\dagger})-\frac{\lmbd}{2N}\Tr([MM^{\dagger}]^2)
\Big\}\,,
\label{eq1}
\end{eqnarray}
where $M$ are complex $N\times N$ matrices.
The measure $dM$ is given by
\begin{equation}
dM=\pi^{-N^2}\prod_{1\leq i,j\leq N}d\text{Re}(M_{ij})d\text{Im}(M_{ij}) \, .
\end{equation}
The core object of our studies here is the free energy of the model defined, as
\begin{eqnarray}
    F[\lmbd, N] = -\frac{1}{N^2} \log \cZ[\lmbd, N]\,.
\label{eq2}
\end{eqnarray}
As was mentioned in the introduction the standard LVE tools allow one to prove the analyticity of the free energy \eqref{eq2} or of the cumulants of the model in the 'pacman'-like or cardioid-like domains, as in Fig. \ref{F1}. The sharpest result regarding the model \eqref{eq1} was obtained by LVE in \cite{Gurau:2014lua}, among other results it was shown there that the free energy \eqref{eq2} is analytic in the
cardioid domain of the coupling constant $\lmbd$,
\begin{equation}
  \cC_0 = \Big\{\lmbd\in \mathbb{C} \,\Big| \arg\lmbd = \phi\,,  4|\lmbd|< \cos^{2}\Big({\frac{\phi}{2}}\Big)\Big\} \, .
\label{cardioid}
\end{equation}
Obviously, instead of considering $\lambda \in \mathbb{C}$ in \eqref{cardioid}, we may define the cardioid as a subset of the Riemann surface of $\sqrt{\lambda}$, ${\cal Y}$, namely,
\begin{equation}
  \cC = \Big\{\lmbd\in {\cal Y} \,\Big| \arg\lmbd = \phi\,,  4|\lmbd|< \cos^{2}\Big({\frac{\phi}{2}}\Big), |\phi| < \pi \Big\} \, .
\label{cardioid2}
\end{equation}

Then, the following theorem gives the main result of the current paper, obtained by merging the ideas of the variational perturbation theory and loop vertex expansion.
\begin{theorem}
\label{mainTheorem}
    For any $\lmbd \in {\cal X}$, where ${\cal X} = \cC \cup {\cal E}$, and ${\cal E}$ is defined as a subset of Riemann surface of $\sqrt{\lambda}$, ${\cal Y}$, with $\lmbd \neq 0$ and $|\arg\lmbd| < \frac{3\pi}{2}$,
    the free energy $F[\lmbd, N]$ of the model \eqref{eq1},
    is analytic in $\lmbd$ uniformly in $N$.
\end{theorem}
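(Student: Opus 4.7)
The plan is to combine the standard intermediate-field representation underlying LVE with a $\lmbd$-dependent choice of the Gaussian initial measure, in the spirit of variational perturbation theory. First, I would perform a Hubbard--Stratonovich transformation to decouple the quartic term, introducing a Hermitian intermediate field $\sigma$ and integrating $M$ out. This yields
\[
\cZ[\lmbd, N] = \int d\sigma\,\exp\Big(-\tfrac{1}{2}\Tr\sigma^{2} - N\,\Tr\log\big(\mathbb{1} + \myI\sqrt{\lmbd/N}\,\sigma\big)\Big),
\]
which is the canonical LVE starting point for this model. For $\lmbd\in\cC$ the cardioid bound $4|\lmbd|<\cos^{2}(\phi/2)$ guarantees that $\mathbb{1}+\myI\sqrt{\lmbd/N}\,\sigma$ is boundedly invertible along the standard real $\sigma$-contour, and the analyticity statement for $\lmbd\in\cC$ can be imported verbatim from \cite{Gurau:2014lua}.

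For $\lmbd\in{\cal E}\setminus\cC$, where the cardioid bound fails, the new ingredient is needed. The idea is to rotate the $\sigma$-contour by an angle $\tht(\lmbd)$ and simultaneously split the quadratic part of the effective action,
\[
\tfrac{1}{2}\Tr\sigma^{2} + N\,\Tr\log\big(\mathbb{1}+\myI\sqrt{\lmbd/N}\,\sigma\big) = \tfrac{1}{2}A(\lmbd)\,\Tr\sigma^{2} + V(\sigma,\lmbd),
\]
where $A(\lmbd)\in\mathbb{C}$ with $\mathrm{Re}\,A(\lmbd)>0$ is chosen so that the residual interaction $V$ has no quadratic part at $\sigma=0$. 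The pair $(\tht(\lmbd),A(\lmbd))$ is the variational data: it is selected so that on the rotated Gaussian contour with covariance determined by $A(\lmbd)$, the resolvent $(\mathbb{1}+\myI\sqrt{\lmbd/N}\,\sigma)^{-1}$ has operator norm uniformly bounded in $N$ and in the fluctuating $\sigma$. This is precisely the mechanism that displaces the branch cut of $\log\det$ out of the integration domain and allows one to cover arguments of $\lmbd$ beyond $\pm\pi$.

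With the variational data fixed, the remainder follows the standard LVE recipe. I would introduce a replica index for each factor appearing in the Taylor expansion of $e^{-V}$, apply the BKAR forest formula to the replica couplings, and use the fact that $-\frac{1}{N^{2}}\log\cZ$ is the restriction of this sum to trees. Each tree amplitude factorises into a Gaussian integral of a product of resolvents, whose norms are bounded by the previous step, and combines with the Cayley count of labelled trees on $n$ vertices to cancel the factorial from the exponential, producing a geometric bound on the $n$-th order, uniformly in $N$. Analyticity of $F[\lmbd,N]$ on $\mathcal{X}$ then follows because $(\tht(\lmbd),A(\lmbd))$ can be chosen holomorphically in $\lmbd$ throughout this domain.

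The main obstacle is the joint construction of $\tht(\lmbd)$ and $A(\lmbd)$ under three competing constraints: (i) the rotated Gaussian measure must be normalisable, i.e.\ $\mathrm{Re}\,A(\lmbd)>0$; (ii) the spectrum of $\mathbb{1}+\myI\sqrt{\lmbd/N}\,e^{\myI\tht(\lmbd)}\,\sigma$ must stay bounded away from zero for all $\sigma$ in the effective support of the Gaussian weight, uniformly in $N$; and (iii) the residual interaction $V(\sigma,\lmbd)$ must satisfy a small-field bound compatible with convergence of the tree sum. The admissible region for $\tht$ shrinks as $|\arg\lmbd|$ grows, and the window $|\arg\lmbd|<\tfrac{3\pi}{2}$ emerges as the largest one compatible with (i)--(iii) for the single-parameter variational ansatz used here; pushing it toward the conjectured $2\pi-\epsilon$ would require a more flexible, multi-parameter choice of initial approximation, as hinted at in the introduction.
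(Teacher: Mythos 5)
Your high-level strategy -- a coupling-dependent Gaussian reference measure inserted into the intermediate-field/BKAR machinery -- is the same as the paper's, but the proposal breaks down exactly where the paper's real work begins. The decisive technical obstacle is this: to control large $|\lmbd|$ you must rescale the reference covariance so that the effective coupling in the resolvent, $\sqrt{\lmbd}/a$ (in the paper $a = x\sqrt{\lmbd}e^{\myI\psi}$ with $x$ large), becomes small and its argument $-\psi$ is decoupled from $\arg\lmbd$. But this rescaling unavoidably produces a large compensating term in the interaction -- in the paper the leaf corner $\myI(1-a)/\sqrt{\lmbd}$, of size $\sim x$ -- so the naive per-vertex bound gives a factor that cannot be made small for all corners simultaneously. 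The paper states explicitly that bounding every vertex by the leaf bound reproduces only the classical cardioid. Your step ``each tree amplitude \dots combines with the Cayley count \dots producing a geometric bound on the $n$-th order'' is therefore false as stated: the ``standard LVE recipe'' does not close here. The paper's resolution is a genuinely new combinatorial ingredient absent from your proposal: the large factor attaches only to \emph{leaves}, so one splits the tree sum by the fraction of leaves, bounds the number of labelled trees with $n$ vertices and $k$ leaves via Stirling numbers of the second kind (showing trees with $\ge \al|V(T)|$ leaves, $\al=59/60$, are exponentially rare), and uses the smallness of $|\sqrt{\lmbd}/a|$ raised to a power proportional to $|V(T)|$ to beat the leaf factors on the remaining trees. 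Without some version of this leaf-counting argument (or another mechanism to tame the compensating term), your convergence claim has no support.

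Two further gaps. First, the window $|\arg\lmbd|<\tfrac{3\pi}{2}$ is asserted to ``emerge'' from your constraints (i)--(iii) but is never derived; in the paper it arises concretely as $|\phi|<\pi+2|\psi|$ from the constraint $|\psi+\phi/2|<\pi/2$ combined with the restriction $|\psi|<\pi/4$ needed to make the tree sum converge with a fixed $\al$. Second, choosing $(\tht(\lmbd),A(\lmbd))$ ``holomorphically in $\lmbd$'' is neither necessary nor what one should do: the paper fixes the variational parameters on overlapping subdomains (constant $x$, a few values of $\psi$) and glues by uniqueness of analytic continuation, using the overlap with the known cardioid result to identify the sum with $F[\lmbd,N]$; your proposal also omits the special treatment of the one- and two-vertex trees and of the normalisation constant $K[\lmbd,N,a]$, which do not fit the generic bound.
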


\section{New initial approximation and intermediate field representation}

Hereafter we assume that $\lmbd \neq 0$. The case of $\lmbd = 0$ is trivial and the vicinity of $\lmbd = 0$ can be easily treated with the standard LVE tools \cite{Gurau:2014lua}. As a first step of our program, we change the Gaussian part of the action which we treat as an unperturbed measure, in other words, we shift the initial approximation from $\Tr(MM^\dagger)$ to $a\Tr(MM^\dagger)$. Then, introducing the intermediate field representation, we obtain,
\begin{eqnarray}
&&\cZ[\lmbd,N]=
\int dM \int dA\, \exp \Big\{-\frac{1}{2}\Tr(A^{2})
- a \Tr(MM^{\dagger})\nonumber\\
&&+ \myI \Tr\Big(A \big[\sqrt{\frac{\lmbd}{N}}MM^{\dagger} + \frac{(1 - a) \sqrt{N}}{\sqrt{\lmbd}}\mathbb{1}\big]\Big) + \Tr\big[\frac{(1 - a)^2N}{2 \lmbd}\mathbb{1}\big]
\Big\}\,,
\label{eq3}
\end{eqnarray}
where $Re\,a >0$, $A$ field is realised by the $N\times N$ Hermitian matrices, and the integral over $A$ is assumed to be normalized, so
\begin{equation}
    \int dA \exp\Big\{-\frac{1}{2} \Tr(A^2)\Big\} = 1\,.
\end{equation}
The integral over the initial degrees of freedom, matrices $M$ and $M^{\dagger}$, is Gaussian, therefore it can be evaluated. The corresponding covariance is given by
$\big(a-\myI\sqrt{\frac{\lmbd}{N}}A\big)\otimes \mathbb{1}$, and using that
\begin{eqnarray}
\det \left[ \Big(a-\myI\sqrt{\frac{\lmbd}{N}}A \Big)\otimes \mathbb{1} \right] 
=
a^{N^2} \exp \bigg\{N\Tr\log\Big(\mathbb{1}-\myI\sqrt{\frac{\lmbd}{a^2 N}}A\Big)\bigg\} \,,
\end{eqnarray}
we obtain
\begin{eqnarray}
\cZ[\lmbd,N]&=& 
e^{\frac{N^2 (1 - a)^2 }{2\lmbd}}a^{-N^2}
\int dA\,
\exp\bigg\{ - \frac{1}{2}\Tr(A^{2})\nonumber\\
&-&N\Tr\log\left(\mathbb{1}-\myI\sqrt{\frac{\lmbd}{a^2 N}}A\right)+
\myI\Tr\bigg(A \frac{(1 - a) \sqrt{N}}{\sqrt{\lmbd}}\bigg) \bigg\}\,.
\end{eqnarray}
Then, we define the non-polynomial interaction, as
\begin{eqnarray}
    {\cal S}[\lmbd, N, a](A) = \Tr\log\left(1-\myI\sqrt{\frac{\lmbd}{a^2 N}}A\right)-
\frac{\myI}{\sqrt{N}}\Tr\bigg(A \frac{(1 - a)}{\sqrt{\lmbd}}\bigg)\,,
\end{eqnarray}
and the normalization, as
\begin{eqnarray}
    K[\lmbd, N, a] = e^{\frac{N^2 (1 - a)^2 }{2 \lmbd}}a^{-N^2}\,.
\end{eqnarray}
Thus, the partition function can be written as
\begin{eqnarray}
\cZ[\lmbd,N]= 
K[\lmbd, N, a]
\int dA\,
\exp\bigg\{ - \frac{1}{2}\Tr(A^{2})
-N{\cal S}[\lmbd, N, a](A) \bigg\}\,.
\label{eq4}
\end{eqnarray}
Let us study the analytic properties of the partition function.
Hereafter, we write the parameter $a$ describing the initial approximation as,
\begin{eqnarray}
    a = x \sqrt{\lmbd} e^{\myI\psi}\,, \qquad -\frac{\pi}{2} < \psi + \frac{\phi}{2} < \frac{\pi}{2}\,, \qquad x > 0\,.
\label{arep}
\end{eqnarray}
Using these notations, we present the bound (14) from \cite{Gurau:2014lua}.
 \begin{lemma}
 \label{boundresolvent}
According to the notations above, we have $\frac{\lmbd}{a^2}=x\mathrm{e}^{-2\myI\psi}$ with $x>0$, then for $\psi \in (-\pi/2, \pi/2)$:
 \begin{equation}
 \Norm\Big(1-\myI\frac{\sqrt{\lmbd}}{a\sqrt{N}}A\Big)^{-1}\Norm\leq\frac{1}{\cos \psi} \, ,
 \end{equation}
 where $ \Norm \cdot \Norm $ stands for the operator norm.
 \end{lemma}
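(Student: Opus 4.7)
The plan is to reduce the operator-norm bound to a scalar inequality by diagonalizing the Hermitian matrix $A$. Writing $A = U\,\mathrm{diag}(\mu_1,\dots,\mu_N)\,U^{\dagger}$ with real eigenvalues $\mu_j$, spectral calculus applied to $z\mapsto (1-\myI z)^{-1}$ gives
\begin{equation*}
\Norm\Big(1-\myI\tfrac{\sqrt{\lmbd}}{a\sqrt{N}}A\Big)^{-1}\Norm \;=\; \max_{j} \bigl|\,1 - \myI\,\beta\,\mu_j\,\bigr|^{-1}, \qquad \beta := \sqrt{\tfrac{\lmbd}{a^2 N}} \,,
\end{equation*}
so it suffices to prove the scalar lower bound $|1 - \myI\beta\mu| \geq \cos\psi$ for every real $\mu$.

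Next I would use the parameterization $\lmbd/a^2 = x\,\mathrm{e}^{-2\myI\psi}$ supplied by the lemma, with $x>0$. Since $-2\psi \in (-\pi,\pi)$ the principal square root is unambiguous and yields $\beta = \sqrt{x/N}\,\mathrm{e}^{-\myI\psi}$. Expanding $-\myI\,\mathrm{e}^{-\myI\psi} = -\sin\psi - \myI\cos\psi$ and setting the real auxiliary variable $t := \mu\sqrt{x/N}$, a direct computation gives
\begin{equation*}
|1 - \myI\beta\mu|^{2} \;=\; (1 - t\sin\psi)^{2} + t^{2}\cos^{2}\psi \;=\; (t - \sin\psi)^{2} + \cos^{2}\psi \;\geq\; \cos^{2}\psi.
\end{equation*}
For $\psi \in (-\pi/2,\pi/2)$ the right-hand side is strictly positive; taking square roots, inverting, and maximizing over $j$ delivers the claimed bound $\|(1-\myI\tfrac{\sqrt{\lmbd}}{a\sqrt{N}}A)^{-1}\| \leq 1/\cos\psi$.

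There is no serious obstacle in this argument: once the Hermiticity of $A$ is exploited to pass to eigenvalues, the remaining inequality is a one-line completion of the square, manifestly uniform in $N$ and in the individual eigenvalue. The only step that requires minor care is the branch choice for $\sqrt{\lmbd}/a$, which must be made consistently with the range of $\psi$ so that $\beta$ really has argument $-\psi$; this is already encoded in the parameterization \eqref{arep} together with the hypothesis $\psi \in (-\pi/2,\pi/2)$.
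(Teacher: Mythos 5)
Your proof is correct, but it takes a different route from the paper. You exploit the Hermiticity of $A$ via the spectral theorem, reducing the operator-norm bound to the scalar inequality $|1-\myI\beta\mu|^2=(t-\sin\psi)^2+\cos^2\psi\ge\cos^2\psi$, which is a clean completion of the square (your algebra checks out, and the identification $\arg\beta=-\psi$ is indeed the only point needing care). The paper instead writes the resolvent as a Laplace-type integral,
\begin{equation*}
\Big(1-\myI\tfrac{\sqrt{\lmbd}}{a\sqrt{N}}A\Big)^{-1}=\frac{a\sqrt{N}}{\sqrt{\lmbd}}\int_{0}^{\infty}d\al\,\exp\Big\{-\al\,\tfrac{a\sqrt{N}}{\sqrt{\lmbd}}+\myI\al A\Big\}\,,
\end{equation*}
uses $\Norm e^{\myI\al A}\Norm=1$ (again Hermiticity, but only through unitarity of $e^{\myI\al A}$), and evaluates the remaining scalar integral to get $|a/\sqrt{\lmbd}|/\mathrm{Re}(a/\sqrt{\lmbd})=1/\cos\psi$; convergence of the integral requires exactly $\cos\psi>0$, matching the hypothesis. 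Your spectral argument is the more elementary of the two and makes the sharpness of the constant transparent (the bound is attained at $t=\sin\psi$). The integral representation is the standard LVE device inherited from \cite{Gurau:2014lua}: it buys robustness in situations where the resolvent sits inside a product of non-commuting corner operators or where explicit diagonalization is unavailable, though for this particular norm estimate the two are interchangeable.
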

 
 \begin{proof}
 The first step is to rewrite the resolvent, as an integral,
\begin{equation}
\Big(1-\myI\frac{\sqrt{\lmbd}}{a\sqrt{N}}A\Big)^{-1}
=\frac{a\sqrt{N}}{\sqrt{\lmbd}}
\int_{0}^{\infty}d\al\,\exp\Big\{  -\al  \frac{a\sqrt{N}}{\sqrt{\lmbd}} + \al \myI A\Big\} \,.
\end{equation}
Using the latter representation, we immediately arrive at the bound for the operator norm,
\begin{eqnarray}
\Norm\Big(1-\myI\frac{\sqrt{\lmbd}}{a\sqrt{N}}A\Big)^{-1}
\Norm
&\leq&\frac{|a|\sqrt{N}}{|\sqrt{\lmbd}|}
\int_{0}^{\infty}\exp\Big\{-\al\text{Re}\big(\frac{a\sqrt{N}}{\sqrt{\lmbd}}\big)\Big\}
\Norm\exp\Big\{\myI \al A\Big\}\Norm\nonumber\\
&=&\frac{1}{\cos\psi} \, .
\end{eqnarray}
\end{proof}

Writing \eqref{eq4}, as
\begin{equation}
\cZ[\lmbd, N] = 
K[\lmbd, N, a]\int dA\,
\frac{\exp\bigg\{-\frac{1}{2}\Tr(A^{2})
-\myI \Tr\bigg(A \frac{(1 - a)\sqrt{N}}{\sqrt{\lmbd}}\bigg) \bigg\}}{\Big[\det \Big(1-\myI\frac{\sqrt{\lmbd}}{a\sqrt{N}}A
\Big)\Big]^{N}}\,,
\end{equation}
we are in a position to formulate the following proposition.
\begin{proposition}
$\cZ[N,\lmbd]$ is analytic in $\lmbd$ on the Riemann surface of the square root with $-2\pi < \phi < 2\pi$. 
\begin{proof}
Using lemma \ref{boundresolvent}, one can show that this integral is convergent for $\psi\in (-\pi/2,\pi/2)$ and $\lambda \neq 0$, or equivalently for $\lmbd / a^2\in{\mathbb C}-{\mathbb R}^{-}-\{0\}$. 
Since the integrand is analytic for $\psi\in (-\pi/2,\pi/2)$ and $\lambda \neq 0$, and $|\psi + \frac{\phi}{2}| < \frac{\pi}{2}$, taking into account proposition $5$ from \cite{Gurau:2014lua} for the case of $\lambda = 0$, we complete the proof.
\end{proof}
\end{proposition}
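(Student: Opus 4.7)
The plan is to establish analyticity of $\cZ[\lmbd,N]$ by treating separately the prefactor $K[\lmbd,N,a]$, which is manifestly holomorphic in $\lmbd$ on the Riemann surface of $\sqrt{\lmbd}$ once $a=x\sqrt{\lmbd}\mathrm{e}^{\myI\psi}$ is substituted, and the $A$-integral in \eqref{eq4}. The key analytic input for the latter is Lemma~\ref{boundresolvent}: for every $\psi\in(-\pi/2,\pi/2)$ it yields a uniform-in-$A$ operator-norm bound on the resolvent, and hence
\[
\bigl|\det\bigl(1-\myI\sqrt{\lmbd/(a^{2}N)}\,A\bigr)\bigr|^{-N}\le (\cos\psi)^{-N^{2}}.
\]
Together with the Gaussian decay $\mathrm{e}^{-\frac{1}{2}\Tr(A^{2})}$ and the at most linear-in-$A$ real part of the linear exponential $\mathrm{e}^{-\myI\Tr(A(1-a)\sqrt{N}/\sqrt{\lmbd})}$, this majorizes the $A$-integrand by an integrable Gaussian, locally uniformly in $\lmbd$. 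Pointwise holomorphy of the integrand in $\lmbd$ at each fixed $A$ then upgrades to holomorphy of the integral via Morera together with dominated convergence.

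Next, I would arrange for this estimate to cover every $\phi\in(-2\pi,2\pi)$. Given such $\phi$, the requirements $\psi\in(-\pi/2,\pi/2)$ and $|\psi+\phi/2|<\pi/2$ from \eqref{arep} admit a non-empty open range of $\psi$ precisely when $|\phi|<2\pi$ (the two open intervals of half-width $\pi/2$ are disjoint only at the endpoints). For each such $\phi$, choosing any admissible $\psi$ yields an open neighborhood of $\lmbd$ on which \eqref{eq4} defines an analytic function by the argument above.

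The most delicate part is checking that these local representations glue into a single analytic function of $\lmbd$ that agrees with the physical $\cZ[\lmbd,N]$. I would handle this by a Cauchy-theorem argument in the auxiliary variable $a$: at fixed $\lmbd$ in the original convergence region, the integrand of \eqref{eq4} is holomorphic in $a$ throughout the region where the resolvent bound holds (the determinant cannot vanish there), and the Gaussian domination justifies a contour deformation showing that the integral is independent of the admissible choice of $a$. Uniqueness of analytic continuation then identifies the locally defined pieces with one global analytic function on $|\phi|<2\pi$. The case $\lmbd=0$ is excluded here and handled by proposition~5 of \cite{Gurau:2014lua}. I expect this $a$-independence step to be the main obstacle, especially ensuring that the contour deformation in $a$ is compatible across the different branches of $\sqrt{\lmbd}$ selected in the different local charts.
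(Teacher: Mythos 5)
Your proposal is correct and follows essentially the same route as the paper's (very compressed) proof: the resolvent bound of Lemma~\ref{boundresolvent} gives an integrable Gaussian majorant, analyticity follows for each admissible pair $(\psi,\phi)$, and sweeping $\psi$ over $(-\pi/2,\pi/2)$ covers exactly $|\phi|<2\pi$, with $\lmbd=0$ delegated to proposition~5 of \cite{Gurau:2014lua}. The one place you go beyond the paper is the explicit $a$-independence/gluing step, which the paper leaves implicit; your outline of it is sound, and it can in fact be closed even more simply by noting that every sector $|\psi+\phi/2|<\pi/2$ contains the positive real $\lmbd$-axis, where each representation agrees with \eqref{eq1}, so the identity theorem glues the local pieces without any contour deformation in $a$.
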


Note, that since $\cZ[N,\lmbd]$ may be zero for certain values of $\lmbd$, its analyticity does not imply the analyticity of the free energy.

\section{The Loop Vertex Expansion}
Using the representation \eqref{eq4} for the partition function, we expand the interaction part of the exponent into the Taylor series. Then, the summation and integration can be interchanged for $\lmbd / a^2\in{\mathbb C}-{\mathbb R}^{-}-\{0\}$, since the resulting series
\begin{eqnarray}
\cZ&=& 
K[\lmbd, N, a]\sum_{n=0}^{\infty}\frac{(-1)^{n}}{n!}
\int d\mu(A)\,
\bigg[N{\cal S}[\lmbd, N, a](A)\bigg]^n\, ,\\
d\mu(A)&=&dA\exp\{-\frac{1}{2}\Tr(A^{2})\}
\end{eqnarray}
is bounded by its absolute value.
Hereafter we drop the arguments $\lmbd$ and $N$ of $\cZ$, and write ${\cal S}[\lmbd, N, a](A)$ as ${\cal S}(A)$ to simplify the notations. 

Applying the replica trick, we replace (at each order $n$) the integral over a single matrix $A$ by the integral over a $n$-copies 
of $N\times N$ Hermitian matrices $A=(A_{i})_{1\leq i\leq n}$. The measure of the Gaussian integral over the replicated matrices $A$ is normalized with a degenerated covariance  $C_{ij}=1$. For any real positive symmetric matrix 
$C_{ij}$ the Gaussian integral obeys 
\begin{equation}
\int d\mu_{C}(A) \,A_{i|ab}A_{j|cd}=C_{ij}\,\delta_{ad}\delta_{bc} \,,~~~\int d\mu_{C}(A)  = 1\,,
\end{equation}
where $A_{i|ab}$ corresponds to the matrix element of $A_{i}$ in the row $a$ and column $b$.
The degenerated covariance in the Gaussian integral is equivalent to insertion $n-1$ 
Dirac $\delta$-functions $\delta(A_{1}-A_{2})\cdots\delta(A_{n-1}-A_{n})$. 
In Feynman diagrams, the uniform covariance connects the various replicas together (with the appropriate weights). 

Then, the partition function is given by
\begin{eqnarray}
\cZ=
K[\lmbd, N, a]\sum_{n=0}^{\infty}\frac{(-1)^{n}}{n!}
\int d\mu_{C}(A)
\prod_{i=1}^{n}
\bigg[N{\cal S}(A_i) \bigg]
\, .
\end{eqnarray}
To take the logarithm of the partition function, we are going to convert the latter expansion into the sum over forests by applying the Bridges-Kennedy-Abdessalam-Rivasseau formula \cite{BK, AR1}. The first step to do it is to replace
the covariance $C_{ij}=1$ by $C_{ij}(x)=x_{ij}$,  $x_{ij}=x_{ji}$ (which at the end should be evaluated at $x_{ij}=1$) for $i\neq j$ and $C_{ii}(x)=1$. Then,
\begin{eqnarray}
\cZ&=&
K[\lmbd, N, a]\sum_{F\,\text{labeled forest}}\frac{(-1)^{n}}{n!}\int_{0}^{1} \prod_{(i,j)\in F}dt_{ij}  \,\,   \left( \prod_{(i,j)\in F}\frac{\partial }{\partial x_{ij}} \right) \nonumber\\
&\times& \bigg\{
\int d\mu_{C(x)}(A)\prod_{i=1}^{n}
\bigg[N{\cal S}(A_i) \bigg]\bigg\}\bigg|_{x_{ij}=v^F_{ij}} \, ,\nonumber\\
\label{LVEcombinatorial}
\end{eqnarray}
 where $n$ is the number of vertices of the forest $F$, $i$ and $j$ label the forest's vertices, and there is a weakening parameter
 $t_{ij}$ per each edge $(i,j)$ of the forest, and 
 \begin{equation}
v^{F}_{ij}=\left\{\begin{array}{ccl}
\inf_{(k,l)\in{ P}_{i\leftrightarrow j}^{{F}}} t_{kl}&\text{if}&  { P}_{i\leftrightarrow j}^{{F}} \, \text{exists} \\
0&\text{if}&  { P}_{i\leftrightarrow j}^{{F}} \, \text{does not exist} \
\end{array}\right. \, ,
\end{equation}
 ${ P}_{i\leftrightarrow j}^{{F}} $ is the unique path in the forest $F$ joining $i$ and $j$ (the infimum is taken to be $1$ if $i=j$). 
 
Applying the following lemma, \cite{Gurau:2014lua}, we can take the logarithm.
\begin{lemma}
Let ${\cal W}(T)$ be the weight of a tree $T$, not depending on the labels of the tree vertices and the weight of a forest 
${\cal W} (F)$ is defined to be the product of the weights of its trees. Then, in the formal series sense, we have
\begin{equation}
\log \sum_{F \text{ labeled forests}}\frac{{\cal W}(F)}{|V(F)|!}=
\sum_{T\text{ labeled trees}}\frac{{\cal W}(T)}{|V(T)|!} \, ,
\end{equation}
where $|V(F)|$ and $|V(T)|$ are the number of vertices in $F$ and $T$ correspondingly.
\end{lemma}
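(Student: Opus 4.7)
The lemma is the classical exponential formula for labeled combinatorial structures specialized to trees and forests, and the plan is to prove it by grouping labeled forests according to their connected components (each of which is a tree) and recognizing the resulting sum as the exponential of the tree generating function.

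First I would introduce the shorthand
\[
Z_n = \sum_{F \text{ labeled forest on } [n]} \mathcal{W}(F), \qquad T_n = \sum_{T \text{ labeled tree on } [n]} \mathcal{W}(T),
\]
with the conventions $Z_0 = 1$ (the empty forest has unit weight) and $T_0 = 0$ (a tree carries at least one vertex). Writing $\mathcal{Z} = \sum_{n \geq 0} Z_n/n!$ and $\mathcal{T} = \sum_{n \geq 1} T_n/n!$, the left-hand side of the identity is $\log \mathcal{Z}$ and the right-hand side is $\mathcal{T}$, so it is enough to prove $\mathcal{Z} = \exp(\mathcal{T})$ as a formal power series identity.

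Next I would decompose each labeled forest into its connected components. A forest $F$ on $[n]$ with exactly $k$ components is specified by an unordered partition of $[n]$ into $k$ non-empty blocks together with a labeled tree on each block. Since $\mathcal{W}(F)=\prod_i \mathcal{W}(T_i)$ by the multiplicativity assumption and the blocks are unordered, choosing an ordering and compensating by $1/k!$ gives
\[
Z_n = \sum_{k \geq 0} \frac{1}{k!} \sum_{\substack{n_1,\ldots,n_k \geq 1 \\ n_1+\cdots+n_k = n}} \binom{n}{n_1,\ldots,n_k} \prod_{i=1}^{k} T_{n_i}.
\]
Dividing by $n!$, using $\binom{n}{n_1,\ldots,n_k}/n! = \prod_i 1/n_i!$, summing over $n$, and interchanging the summations, the $k$-th term collapses to $\mathcal{T}^k/k!$, so
\[
\mathcal{Z} = \sum_{k \geq 0} \frac{\mathcal{T}^k}{k!} = \exp(\mathcal{T}),
\]
and taking the logarithm yields the identity in the lemma.

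The main (and essentially only) obstacle is bookkeeping: one must handle the empty-forest constant term correctly, account for the $1/k!$ symmetry factor arising from the fact that a forest is an \emph{unordered} collection of trees rather than a sequence, and remain aware that the identity is read level-by-level in the number of vertices. This last point is precisely what the phrase \emph{``in the formal series sense''} in the statement of the lemma refers to, and it means that no analytic convergence of the sums over forests or trees needs to be controlled in order to establish this purely combinatorial identity.
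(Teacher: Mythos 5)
Your proof is correct and complete: it is the standard exponential-formula argument (decompose a labeled forest into its unordered collection of tree components, distribute labels with a multinomial coefficient, compensate the ordering by $1/k!$, and resum to $\mathcal{Z}=\exp(\mathcal{T})$), and it correctly uses both the label-independence of $\mathcal{W}(T)$ and the multiplicativity of $\mathcal{W}(F)$ over components. The paper itself states this lemma without proof, importing it from the cited reference, where it is established by essentially the same combinatorial computation you give, so there is nothing to compare beyond noting that your argument is the expected one.
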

Since the differentiation with respect to $x_{ij}$ and the Gaussian integration factor over the trees in the forest $F$, we obtain
 \begin{eqnarray}
  \log\cZ &=& \log K[\lmbd, N, a] + \sum_{T\,\text{labeled trees}}\frac{(-1)^{n}}{n!}\int_{0}^{1} \prod_{(i,j)\in T}dt_{ij} \, 
\left( \prod_{(i,j)\in T}\frac{\partial }{\partial x_{ij}} \right) 
\crcr
&\times& \bigg\{
\int d\mu_{C(x)}(A)\prod_{i=1}^{n}\bigg[N{\cal S}(A_i) \bigg]\bigg\}\bigg|_{v^T_{ij}} \, ,\crcr
 v^{T}_{ij} &=& \inf_{(k,l)\in{ P}_{i\leftrightarrow j}^{T} } t_{kl} \, .
\end{eqnarray}
where $n = |V(T)|$ is the number of the vertices, and ${ P}_{i\leftrightarrow j}^{T} $ stands for the unique path joining vertices $i$ and $j$ in the tree $T$. 
Expressing the Gaussian integral as a differential operator,
\begin{eqnarray}
 \int d\mu_{C(x)}(A)  F(A) =
 \left[ e^{\frac{1}{2} \sum_{i,j} x_{ij} \Tr \left[\frac{\partial }{ \partial A_i} \frac{\partial }{ \partial A_j}\right] } F(A)\right]_{A_i=0} \, ,
\end{eqnarray}
 we see that
\begin{equation}
\frac{\partial}{\partial x_{ij}}
\bigg(\int d\mu_{C(x)}(A)F(A)\bigg)=
\frac{1}{2} \int d\mu_{C(x)}(A) \, \Tr \left[\frac{\partial }{ \partial A_i} \frac{\partial }{ \partial A_j}\right] F(A) \, .
\end{equation}
The latter differential operator acts on $i$ and $j$ vertices and connects them by an edge. 

The first derivative of the loop vertex (non-polynomial action ${\cal S}(A_i)$) is given by
\begin{eqnarray}
   &&\frac{\partial}{\partial A_{i|cd}} \bigg[\Tr\log\left(1-\myI\sqrt{\frac{\lmbd}{a^2 N}}A\right)+
\frac{\myI}{\sqrt{N}}\Tr\bigg(A \frac{(1 - a)}{\sqrt{\lmbd}}\bigg) \bigg] = \nonumber\\
&&-\myI\frac{\sqrt{\lmbd}}{a\sqrt{N}} \bigg(1-\text{i}\frac{\sqrt{\lmbd}}{a\sqrt{N}}A_{i}\Big)_{dc}^{-1} + \myI \frac{(1 - a)}{\sqrt{\lmbd} \sqrt{N}} \mathbb{1}_{dc}\,.
\label{firstD}
\end{eqnarray}
Only the first term of \eqref{firstD} is relevant for applying all further derivatives. Therefore all other derivatives can be computed using the following recursive relation,
\begin{equation}
\frac{\partial }{ \partial A_{i|ab}} \frac{-\myI\sqrt{\lmbd}}{a\sqrt{N}} \Big(1-\text{i}\frac{\sqrt{\lmbd}}{a\sqrt{N}}A_{i}\Big)_{cd}^{-1}
= -\frac{\lmbd}{a^2 N} \Big(1-\text{i}\frac{\sqrt{\lmbd}}{a\sqrt{N}}A_{i}\Big)_{ca}^{-1} \Big(1-\text{i}\frac{\sqrt{\lmbd}}{a\sqrt{N}}A_{i}\Big)_{bd}^{-1} \, .
\end{equation}
Let $V(T)$ be the set of all vertices of the tree, and $E(T)$ be the set of edges of the tree.
We observe that in each tree $T$ with $|V(T)| > 2$ there are two types of vertices: 
internal vertices and leaves. If the vertex is differentiated only once, it is a leaf, bringing the contribution of the form of \eqref{firstD}. We say that leaf vertices have only one corner. Multiple derivatives acting on the same vertex (corresponding to multiple edges hooked to it) can act on either of the corners of the vertex, splitting it into two corners and eliminating the constant term of \eqref{firstD}, see Fig. \ref{LVEcorners}. 
\begin{figure}[!ht]
\begin{center}
{\includegraphics[width=7cm]{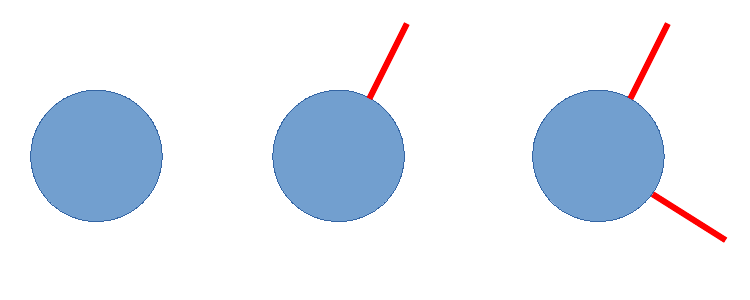}}
\end{center}
\caption{From left to right: a vertex without corners -- corresponds to a trivial tree, a vertex with only one corner -- a leaf, vertex with two corners.}
\label{LVEcorners}
\end{figure}
Each derivative brings an additional factor of $\frac{1}{\sqrt{N}}$, and consequently, each edge of any tree comes with the factor $\frac{1}{N}$. In the following we attribute factors $\frac{1}{N}$ to the edges, and define a corner operator as
\begin{equation}
\cC = 
    \begin{cases}
      \frac{-\myI\sqrt{\lmbd}}{a}\Big(1-\text{i}\sqrt{\frac{\lmbd}{a^2 N}}A_{i}\Big)_{dc}^{-1} + \myI \frac{(1 - a)}{\sqrt{\lmbd}} \mathbb{1}_{dc}\,, \text{   only one corner in the vertex}\\~\\
      \frac{-\myI\sqrt{\lmbd}}{a}\Big(1-\text{i}\sqrt{\frac{\lmbd}{a^2 N}}A_{i}\Big)_{dc}^{-1}\,, \text{   if there are more corners}
    \end{cases}\,.
\end{equation}
Therefore, the logarithm of $\cZ$ can be expressed, as
\begin{align}
 \log \cZ  = & \log K[\lmbd, N, a] +
\sum_{T\,\atop\text{LVE tree}}  \cA_{T}[\lmbd,N]  \, ,\crcr
\cA_{T}[\lmbd,N]   = &  (-1)^{|V(T)|}\frac{N^{|V(T)|-|E(T)|}}{|V(T)|!} \int_0^1 \prod_{e\in E(T)}dt_{e}\, \crcr
  & \times \int d\mu_{C_{T}}(A) \, 
 \Tr\Big[
\mathop{\overrightarrow{\prod}}\limits_{c\in \partial T\,\text{corner}}
\cC_c(i_c)\Big] \, ,
\label{eqLVE}
\end{align}
where $\partial T$ denotes the set of corners of the tree $T$, $i_{c}$ labels the vertex, to which the corner $c$ is attached,the covariance $C_T$ is 
\begin{equation}
(C_{T})_{ij}=\inf_{(k,l)\in P^{T}_{i\leftrightarrow j}} t_{kl}\,,
\end{equation}
and the infimum is taken to be $1$ if $i=j$.

\section{Bounds}
The expansion \eqref{eqLVE} contains three types of contributions, which should be considered separately: a trivial tree with only one vertex, a tree with two vertices, and all other trees. In the following we treat these cases step by step.

\subsection{Trivial tree}
As always in the LVE formalism there is a trivial tree with only one vertex, see Fig. \ref{LVEcorners}. It requires a special treatment. The contribution of the trivial tree is given by
\begin{eqnarray}
    \cA_{T_1} = \int d\mu_{C(x)}(A) \bigg[N \Tr\log\left(1-\myI\frac{\sqrt{\lmbd}}{a\sqrt{N}}A\right)+
\myI \Tr\bigg(A \frac{(1 - a)\sqrt{N}}{\sqrt{\lmbd}}\bigg) \bigg]\,.
\end{eqnarray}
The second term gives zero after the integration and the first can be bounded by integrating by parts as 
\begin{eqnarray}
    \cA_{T_1} &=& N \int dA\, e^{-\frac{1}{2}\Tr\big[A^2\big]} \Tr\Bigg[ \log (\mathbb{1} - \text{i} \frac{\sqrt{\lmbd}}{a\sqrt{N}}A)\Bigg]\nonumber\\
    &=& N \int dA\, e^{-\frac{1}{2}\Tr\big[A^2\big]} \Tr\Bigg[\int_0^1 dt\, \frac{- \text{i} \frac{\sqrt{\lmbd}}{a\sqrt{N}} t A}{(\mathbb{1} - \text{i} \frac{\sqrt{\lmbd}}{a\sqrt{N}}A)}\Bigg] \nonumber\\
    &=& \int d\mu(A)\, \Tr\Bigg[ \int_0^1 dt\, \frac{-\frac{\lmbd}{a^2} t}{(\mathbb{1} - \text{i} \frac{\sqrt{\lmbd}}{a\sqrt{N}}A)^2}\Bigg]
\end{eqnarray}
Consequently, we have the bound,
\begin{eqnarray}
    |\cA_{T_1}| \leq \frac{N}{2} \Big\vert\frac{\lambda}{a^2}\Big\vert \,.
\label{boundAT1}
\end{eqnarray}

Since the integrand is analytic for $\psi\in (-\pi/2,\pi/2)$, and $|\psi + \frac{\phi}{2}| < \frac{\pi}{2}$, employing \eqref{arep},  we arrive at the following conclusion.
\begin{lemma}
\label{trivialTreeLemma}
The amplitude of the trivial tree, $\cA_{T_1}$, is analytic in $\lmbd$ and uniformly in $N$ bounded by \eqref{boundAT1} on the Riemann surface of the square root with $\lambda \neq 0$ and $-2\pi < \phi < 2\pi$. 
\end{lemma}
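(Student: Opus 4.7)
The plan is to make rigorous the integration-by-parts and bounding computation already displayed above the statement, and to combine it with Lemma~\ref{boundresolvent}. First, for the bound, I take the reduced form $\cA_{T_1} = \int d\mu(A)\,\Tr\bigl[\int_0^1 dt\,(-\lmbd t / a^2)(\mathbb{1} - \myI\sqrt{\lmbd/(a^2 N)}A)^{-2}\bigr]$ obtained in the text. Using $|\Tr M| \le N \Norm M\Norm$ for an $N \times N$ matrix and the submultiplicativity of the operator norm, each factor of the resolvent is controlled by Lemma~\ref{boundresolvent}, which gives a pointwise-in-$A$ bound. Since this bound is $A$-independent, it may be pulled out of the normalized Gaussian integral; the trivial $t$-integration produces the factor $1/2$. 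The resulting estimate has the form $(N/2)|\lmbd/a^2|/\cos^2\psi$, which under the parametrization \eqref{arep} is $N/(2x^2\cos^2\psi)$: finite, and manifestly uniform in $N$, in agreement with \eqref{boundAT1} up to the harmless $\psi$-dependent factor.

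Second, for the analyticity, I fix a base point $\lmbd_0 \in {\cal Y}$ with $\arg\lmbd_0 = \phi_0 \in (-2\pi,2\pi)$. Since the two constraints $\psi \in (-\pi/2,\pi/2)$ and $|\psi + \phi/2|<\pi/2$ have a non-empty common solution exactly when $|\phi|<2\pi$, I pick $\psi_0 \in (-\pi/2,\pi/2)$ admissible at $\phi_0$; by continuity the same $\psi_0$ remains admissible on an open neighbourhood $U$ of $\lmbd_0$ in ${\cal Y}$. On $U$ set $a(\lmbd) = x\sqrt{\lmbd}\,e^{\myI\psi_0}$, which is holomorphic in $\lmbd$; the integrand of $\cA_{T_1}$ is then pointwise holomorphic in $\lmbd$ and dominated by the $A$-independent bound obtained above. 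Dominated convergence combined with Morera's theorem (applied to small contour integrals exchanged with the Gaussian expectation) yields holomorphy of $\cA_{T_1}(\lmbd)$ on $U$.

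The only subtle point, and the main obstacle, is to verify that the local pieces obtained from different admissible choices of $\psi$ covering different subregions of the sector $-2\pi<\phi<2\pi$ glue into a single-valued holomorphic function. This follows because, for $\lmbd$ fixed, the amplitude viewed as a function of the auxiliary parameter $a$ is holomorphic on the entire half-plane $\mathrm{Re}(\sqrt{\lmbd}/a)>0$: the resolvent bound of Lemma~\ref{boundresolvent} is uniform there and the integrand is jointly analytic in $A$ and $a$, so the Gaussian integral inherits this joint analyticity. Hence any two local determinations along overlapping $\lmbd$-neighbourhoods are restrictions of this single analytic function to different curves $a(\lmbd)$ and agree on their overlap by uniqueness of analytic continuation. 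Patching them produces the globally well-defined analytic function on the sector claimed in the lemma.
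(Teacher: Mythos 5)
Your first two paragraphs follow essentially the paper's own route: the integration-by-parts representation, the resolvent bound of Lemma~\ref{boundresolvent} applied pointwise in $A$ and pulled out of the normalized Gaussian integral, and a Morera/dominated-convergence argument for analyticity at fixed admissible $(x,\psi)$. That part is sound, and you are right that the honest estimate carries a factor $\cos^{-2}\psi$ that is absent from \eqref{boundAT1}. Be aware, however, that the ``reduced form'' you import from the text is itself not reliable: Gaussian integration by parts on $\Tr\bigl[A\,R_t\bigr]$ with $R_t=(\mathbb{1}-\myI t\sqrt{\lmbd/(a^2N)}\,A)^{-1}$ and covariance $\langle A_{ab}A_{cd}\rangle=\delta_{ad}\delta_{bc}$ produces $(\Tr R_t)^2$, not $\Tr[R_t^2]$, so the bound one actually obtains is $\tfrac{N^2}{2}\,|\lmbd/a^2|\,\cos^{-2}\psi$. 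The quadratic order of $\cA_{T_1}$ in $\sqrt{\lmbd}/a$ is $\tfrac{N^2}{2}\,\lmbd/a^2$, which already violates the $O(N)$ form of \eqref{boundAT1}; the $O(N^2)$ scaling is harmless (it matches every other tree bound and the normalization $-N^{-2}\log\cZ$), but a blind proof should derive, not quote, this intermediate identity.

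The genuine gap is your third paragraph. Restricting a jointly analytic function $F(\lmbd,a)$ to two different analytic curves $a_1(\lmbd)$ and $a_2(\lmbd)$ yields two analytic functions of $\lmbd$ that have no reason to coincide on overlapping $\lmbd$-neighbourhoods; uniqueness of analytic continuation is inapplicable because the two restrictions already differ as germs. And here they genuinely differ: after the linear term integrates to zero, $\cA_{T_1}$ depends on $(\lmbd,a)$ only through $w=\sqrt{\lmbd}/a=e^{-\myI\psi}/x$, so for fixed $x$ each admissible $\psi$ produces a different \emph{constant} function of $\lmbd$ (equal to $\tfrac{N^2}{2}w^2+O(w^4)$), and the local pieces do not glue into a single-valued function on the whole sector $-2\pi<\phi<2\pi$. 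This is not something you need to repair: individual LVE amplitudes are not independent of the choice of $a$ --- only the full sum $\log\cZ$ is --- and the paper performs the patching over different values of $\psi$ only in the proof of Theorem~\ref{mainTheorem}, at the level of the free energy, using its overlap with the cardioid where $F[\lmbd,N]$ is already known. The lemma should therefore be read, as the paper's one-sentence proof implicitly does, as asserting the bound and analyticity for each fixed admissible $(x,\psi)$ on the corresponding sub-sector $|\psi+\phi/2|<\pi/2$, whose union over $\psi$ covers $-2\pi<\phi<2\pi$; your first two paragraphs already establish that, and the third should be dropped rather than fixed.
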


\subsection{Two-vertex tree}
To derive the bound for the tree with two vertices, see Fig. \ref{T2fig}, we start with a general bound for the corner operators that are also useful for all other trees.
Using the lemma \ref{boundresolvent}, we obtain
\begin{equation}
\Vert\cC\Vert \leq 
    \begin{cases}
      \Bigg\vert \frac{\sqrt{\lmbd}}{a} \frac{1}{\cos\psi}\Bigg\vert + \Bigg\vert\frac{(1 - a)}{\sqrt{\lmbd}}\Bigg\vert\,, \text{   if there is only one corner in the vertex}\\~\\
      \Bigg\vert \frac{\sqrt{\lmbd}}{a} \frac{1}{\cos\psi}\Bigg\vert\,, \text{   if there are more corners}
    \end{cases}\,.
\label{cornerbound}
\end{equation}
\begin{figure}[!ht]
\begin{center}
{\includegraphics[width=3cm]{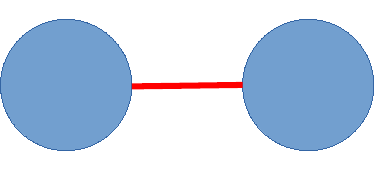}}
\end{center}
\caption{Two-vertex tree.}
\label{T2fig}
\end{figure}
Remembering the representation \eqref{arep}, for sufficiently large $x$, we can simplify the one-corner bound in \eqref{cornerbound}, requiring that
\begin{eqnarray}
    \Bigg\vert \frac{\sqrt{\lmbd}}{a} \frac{1}{\cos\psi}\Bigg\vert + \Bigg\vert\frac{(1 - a)}{\sqrt{\lmbd}}\Bigg\vert \leq  \Bigg\vert\frac{3(1 - a)}{2\sqrt{\lmbd}}\Bigg\vert\,.
\label{doubleBound}
\end{eqnarray}
It is not hard to see that the latter inequality is valid when
\begin{eqnarray}
    x \geq \max\{x_1, \frac{1}{|\sqrt{\lambda}|}\}\,,\qquad x_1 = \frac{\cos\psi + \sqrt{\cos^2\psi + 8|\lmbd|\cos\psi}}{2 |\sqrt{\lmbd}|\cos\psi}\,.
\label{xineq1}
\end{eqnarray}
Note, as $|\psi| < \pi/2$, we have that $\cos\psi>0$.

In the two-vertex tree, both vertices have only one corner, therefore, using \eqref{doubleBound}, taking into account that there are $2$ vertices, $1$ edge, and one factor $N$ coming from the trace, we can bound the amplitude of this tree as
\begin{eqnarray}
    \vert\cA_{T_2}\vert \leq \frac{9 N^2}{8} \int d\mu_{C(x)}(A) \Bigg\vert\frac{(1 - a)}{\sqrt{\lmbd}}\Bigg\vert^2\,.
\end{eqnarray}
Due to the analyticity of the integrand for $|\psi| < \pi/2$ and the latter bound, we arrive at the following.
\begin{lemma}
\label{T2Lemma}
The amplitude of the trivial tree, $\cA_{T_2}$, is analytic in $\lmbd$ and bounded on the Riemann surface of the square root with $\lmbd \neq 0$, $-2\pi < \phi < 2\pi$. 
\end{lemma}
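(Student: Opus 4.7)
The plan is to establish analyticity and pointwise finiteness of $\cA_{T_2}$ by tuning the variational parameter $a = x\sqrt{\lmbd}e^{\myI\psi}$ as a function of $\lmbd$ so that three constraints hold simultaneously: first, $|\psi + \phi/2| < \pi/2$, which is the condition $\textrm{Re}\,a > 0$ required in the derivation of \eqref{eq4}; second, $|\psi| < \pi/2$, so that Lemma \ref{boundresolvent} applies and hence the single-corner estimate \eqref{cornerbound} is valid; third, $x \geq \max\{x_1,\, 1/|\sqrt{\lmbd}|\}$, so that \eqref{doubleBound} can be used and the bound $|\cA_{T_2}| \leq \frac{9N^2}{8}|(1-a)/\sqrt{\lmbd}|^2$ already derived in the excerpt is in force. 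Once such a triple $(x,\psi)$ is chosen, this is a finite number depending on $\lmbd$, which is precisely the pointwise boundedness claimed.

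For analyticity in $\lmbd$, I would fix $\lmbd_0$ in the strip, pick a valid $(x,\psi)$ for $\lmbd_0$, and note that all three constraints are open; hence the same $(x,\psi)$ remains admissible on a small complex neighbourhood of $\lmbd_0$. On this neighbourhood the integrand is jointly analytic in $\lmbd$ and, by the same corner bound \eqref{cornerbound} combined with Lemma \ref{boundresolvent}, is dominated by an $A$-integrable function uniformly in $\lmbd$. Analyticity of $\cA_{T_2}$ then follows by differentiation under the integral sign (or equivalently by Morera's theorem combined with Fubini), and the choice of $(x,\psi)$ is internal to the representation and does not affect the value of the integral.

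The only non-trivial point is verifying that the two angular constraints can be met for every $\phi \in (-2\pi, 2\pi)$. Constraint one forces $\psi$ into $(-\pi/2 - \phi/2,\, \pi/2 - \phi/2)$, while constraint two forces $\psi$ into $(-\pi/2,\, \pi/2)$. For $|\phi| < \pi$ the intersection is non-empty and contains $\psi = 0$. For $\phi \in [\pi, 2\pi)$ the intersection reduces to $(-\pi/2,\, \pi/2 - \phi/2)$, which is non-empty precisely when $\phi < 2\pi$; the range $\phi \in (-2\pi, -\pi]$ is handled symmetrically. Thus for each admissible $\phi$ one can select a valid $\psi$ and then set $x$ equal to (or slightly above) its lower bound from constraint three.

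The main obstacle is cosmetic rather than substantive: as $\phi \to \pm 2\pi$, any admissible $\psi$ is squeezed toward $\mp \pi/2$, so $\cos\psi \to 0$, $x_1$ diverges, and the factor $|(1-a)/\sqrt{\lmbd}|^2$ blows up at the boundary of the strip. This merely degrades the bound near the edge; for each interior $\lmbd$ it remains finite, which is exactly what Lemma \ref{T2Lemma} asserts. I do not anticipate any other serious difficulty, since all of the analytic continuation work has already been absorbed into the choice of $a$ and into Lemma \ref{boundresolvent}.
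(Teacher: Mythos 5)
Your proposal follows essentially the same route as the paper: the single-corner bound \eqref{cornerbound} combined with \eqref{doubleBound} gives $|\cA_{T_2}|\le \frac{9N^2}{8}\big|(1-a)/\sqrt{\lmbd}\big|^2$, and analyticity comes from the analyticity of the integrand under the two angular constraints, which you correctly check are simultaneously satisfiable exactly for $|\phi|<2\pi$. One remark in your argument is false, however, and it is load-bearing for the way you glue local patches into a global statement: the value of $\cA_{T_2}$ \emph{does} depend on the choice of $(x,\psi)$. The corner operators and the interaction ${\cal S}[\lmbd,N,a]$ contain $a$ explicitly, and only the full sum $\log\cZ$ (equivalently the free energy) is independent of the variational parameter; individual tree amplitudes are not. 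Consequently, choosing a different admissible $(x,\psi)$ for each $\lmbd_0$ and invoking independence of the representation does not produce a single analytic function $\cA_{T_2}$ on the whole strip $-2\pi<\phi<2\pi$. The correct reading --- and the one the paper uses --- is that for each \emph{fixed} $(x,\psi)$ the amplitude is analytic on the sector of $\phi$ compatible with $|\psi|<\pi/2$ and $|\psi+\phi/2|<\pi/2$, and the gluing of sectors is deferred to the proof of Theorem~\ref{mainTheorem}, where it is performed at the level of the $a$-independent free energy via uniqueness of analytic continuation. With that correction your argument matches the paper's; your additional observations (openness of the constraints, degradation of the bound as $\phi\to\pm2\pi$) are accurate and merely make explicit what the paper leaves implicit.
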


\subsection{Other trees}

Obviously, the bound for the leaves is larger than for the internal vertices. If we use it for all vertices, we can bound the amplitude of each tree, as
\begin{align}
\bigg|
\Tr\Big[
\mathop{\overrightarrow{\prod}}\limits_{c\in \partial T\,\text{corner}}
\cC_c(i_c)\Big]
\bigg|
&\leq N  \Bigg(\Bigg\vert \frac{\sqrt{\lmbd}}{a} \frac{1}{\cos\psi}\Bigg\vert + \Bigg\vert\frac{(1 - a)}{\sqrt{\lmbd}}\Bigg\vert\Bigg)^{2|E(T)|}\, .
\label{eqTreeABound}
\end{align} 
However, the best results given by this bound are achieved when one takes $a = 1$, which corresponds to classic LVE results. It happens since it is impossible to make simultaneously small both terms: $\Bigg\vert \frac{\sqrt{\lmbd}}{a} \frac{1}{\cos\psi}\Bigg\vert \text{ and }\Bigg\vert\frac{(1 - a)}{\sqrt{\lmbd}}\Bigg\vert$.
To overcome this difficulty, we observe that:
\begin{itemize}
    \item We can always make the term $\Bigg\vert \frac{\sqrt{\lmbd}}{a} \frac{1}{\cos\psi}\Bigg\vert$ as small as needed, by varying the ratio $\frac{\sqrt{\lmbd}}{a}$ (considering large $x$).
    \item The factor $\Bigg\vert \frac{\sqrt{\lmbd}}{a} \frac{1}{\cos\psi}\Bigg\vert + \Bigg\vert\frac{(1 - a)}{\sqrt{\lmbd}}\Bigg\vert$, for which hereafter we will use the bound \eqref{doubleBound}, comes only from the leave vertices, and in a general tree there are not so many leaves, see Fig. \ref{treesFig}.
\end{itemize}
\begin{figure}[!ht]
\begin{center}
{\includegraphics[width=14cm]{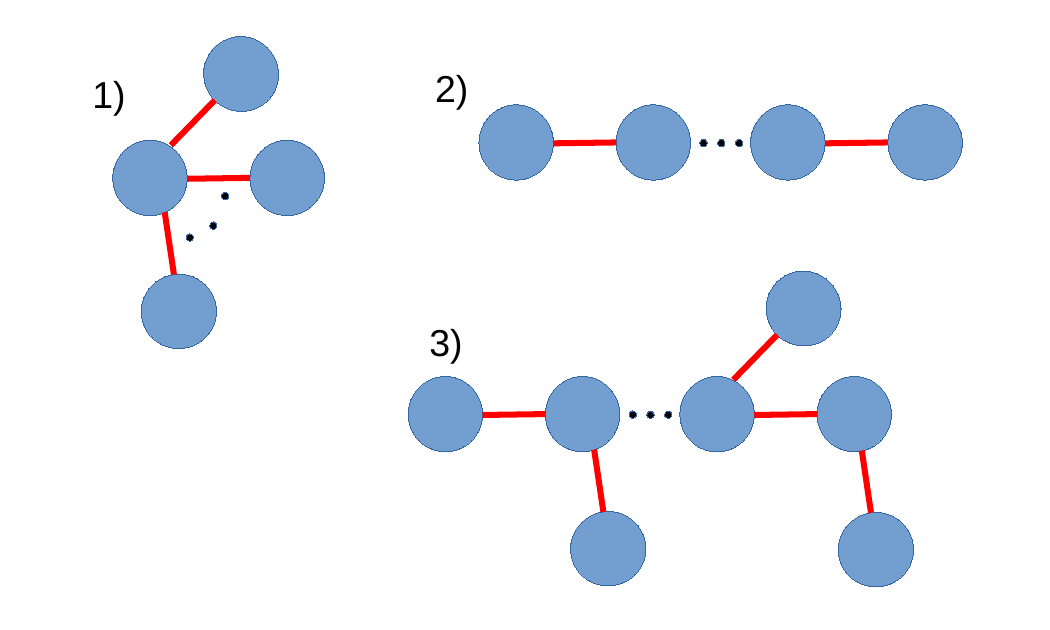}}
\end{center}
\caption{1) A tree with a maximal amount of leaves at the given order of the loop vertex expansion. 2) A tree with a minimal amount of leaves at the given order of the LVE. 3) Representation of an average LVE tree with not so many leaves.}
\label{treesFig}
\end{figure}

Being inspired by the latter observations, we first establish the bound for the corner operators of a general tree.
\begin{lemma}
\label{cornerLemma}
If a tree has $n_l$ leaves, $n_i$ internal vertices, and $n_l + n_i = |V(T)| > 2$, its amplitude given by a trace of the product of the corner operators is bounded by
\begin{align}
\bigg|
\Tr\Big[
\mathop{\overrightarrow{\prod}}\limits_{c\in \partial T_{(n_i, n_l)}\,\text{corner}}
\cC_c(i_c)\Big]
\bigg|
&\leq N  \Bigg\vert \frac{\sqrt{\lmbd}}{a} \frac{1}{\cos\psi}\Bigg\vert^{2 n_i - 2} \Bigg\vert\frac{3(1 - a)}{2 a \cos\psi}\Bigg\vert^{n_l}\, .
\label{eqTreeABound2}
\end{align} 
\end{lemma}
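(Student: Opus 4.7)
I plan to proceed by (i) counting corners on the tree, (ii) bounding each corner operator in operator norm, and (iii) collapsing the trace of their ordered product to $N$ times the product of operator norms.

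First, I count: since $T$ has $|V(T)|=n_i+n_l$ vertices and $|E(T)|=n_i+n_l-1$ edges, the handshake identity gives that the total number of corners equals $2(n_i+n_l-1)$. Leaves contribute exactly one corner each, so the number of corners sitting on internal vertices is $2(n_i+n_l-1)-n_l=2n_i+n_l-2$. This is the bookkeeping that underlies the exponents appearing in \eqref{eqTreeABound2}.

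Next, I bound each corner factor using Lemma \ref{boundresolvent}, which controls $\|(1-\myI\sqrt{\lmbd/(a^2N)}A_i)^{-1}\|\leq 1/\cos\psi$. From the definition of $\cC$, an internal-vertex corner therefore satisfies $\|\cC\|\leq |\sqrt{\lmbd}/a|/\cos\psi$, and a leaf corner satisfies $\|\cC\|\leq |\sqrt{\lmbd}/a|/\cos\psi+|(1-a)/\sqrt{\lmbd}|$, which under the hypothesis on $x$ in \eqref{xineq1} is further bounded by $|3(1-a)/(2\sqrt{\lmbd})|$ via \eqref{doubleBound}.

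I then combine these bounds using submultiplicativity of the operator norm together with $|\Tr(M)|\leq N\,\|M\|$ for any $N\times N$ matrix $M$, which accounts for the single prefactor $N$ in \eqref{eqTreeABound2}. Multiplying $n_l$ leaf bounds and $2n_i+n_l-2$ internal bounds yields
\begin{equation*}
\Bigl|\Tr\bigl[\overrightarrow{\prod}_c\cC_c(i_c)\bigr]\Bigr|
\leq N\,\Bigl|\tfrac{\sqrt{\lmbd}}{a\cos\psi}\Bigr|^{2n_i+n_l-2}\Bigl|\tfrac{3(1-a)}{2\sqrt{\lmbd}}\Bigr|^{n_l}.
\end{equation*}
The last step is purely algebraic: I regroup $n_l$ of the internal factors with the leaf factors, using $|\sqrt{\lmbd}/(a\cos\psi)|\cdot|3(1-a)/(2\sqrt{\lmbd})|=|3(1-a)/(2a\cos\psi)|$, leaving $2n_i-2$ internal factors untouched. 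This recasts the bound in exactly the form stated in \eqref{eqTreeABound2}. The only mild subtlety I anticipate is keeping the corner count correct for vertices of degree greater than two (which the dichotomy leaf/internal still handles uniformly, since such vertices only contribute more ``internal''-type corners, already accounted for in $2n_i+n_l-2$), and ensuring the regime \eqref{xineq1} on $x$ is in force so that \eqref{doubleBound} can be applied to every leaf simultaneously.
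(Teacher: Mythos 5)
Your proof is correct, but it takes a genuinely different route from the paper's. The paper establishes \eqref{eqTreeABound2} by induction on $|V(T)|$: the base case is the three-vertex chain ($n_i=1$, $n_l=2$), and the inductive step distinguishes whether the new vertex is attached as a leaf (one leaf factor plus one new internal corner on the vertex it hooks to) or inserted as an internal vertex (two new internal corners), with a further subcase depending on whether $n_l=2$ or $n_l>2$. You instead get the exponents in one stroke from the handshake identity: since a vertex of degree $d$ carries $d$ corners, the $2|E(T)|=2(n_i+n_l)-2$ corners split into $n_l$ leaf corners and $2n_i+n_l-2$ internal corners; combining the per-corner bounds \eqref{cornerbound} and \eqref{doubleBound} with $|\Tr M|\le N\,\Norm M\Norm$ and submultiplicativity of the operator norm, and then absorbing $n_l$ of the internal factors into the leaf factors via $|\sqrt{\lmbd}/(a\cos\psi)|\cdot|3(1-a)/(2\sqrt{\lmbd})|=|3(1-a)/(2a\cos\psi)|$, yields exactly \eqref{eqTreeABound2}. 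Your direct count is shorter and makes transparent where the exponents $2n_i-2$ and $n_l$ come from, and it handles high-degree internal vertices uniformly, whereas the paper's induction mirrors how LVE trees are built up but needs the extra case analysis. Both arguments rest on the same ingredients, namely Lemma \ref{boundresolvent} and the regime \eqref{xineq1} that makes \eqref{doubleBound} available at every leaf, which you correctly note must be in force.
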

\begin{proof}
We prove \eqref{eqTreeABound2} by induction. For the base, it is enough to check \eqref{eqTreeABound2} for the tree with $n_i = 1$ and $n_l = 2$. According to \eqref{doubleBound}, two leaves bring the factor
\begin{eqnarray}
    \Bigg\vert\frac{3(1 - a)}{2\sqrt{\lmbd}}\Bigg\vert^2\,,
\end{eqnarray}
and two corner operators of the internal vertex give
\begin{eqnarray}
\Bigg\vert\frac{\sqrt{\lmbd}}{a} \frac{1}{\cos\psi}\Bigg\vert^{2}\,.
\label{twocorners}
\end{eqnarray}
All together, taking into account the factor $N$ coming from the trace, we have exactly \eqref{eqTreeABound2} for $n_i = 1$ and $n_l = 2$.

Then, for the induction, we assume that \eqref{eqTreeABound2} is valid for all trees with $n_v$ vertices, $n_l + n_i = n_v$, and prove that it is then valid for all trees with $(n_v + 1)$ vertices.
The trees with $(n_v + 1)$ vertices can be obtained from the trees with $n_v$ vertices by increasing the number of leaves or number of internal vertices. In the first case, see Fig. \ref{proof2}, the bound \eqref{eqTreeABound2} gets an extra factor
\begin{eqnarray}
    \Bigg\vert\frac{3(1 - a)}{2\sqrt{\lmbd}}\Bigg\vert\,,
\end{eqnarray}
from the leaf, and an additional factor
\begin{eqnarray}
\Bigg\vert\frac{\sqrt{\lmbd}}{a} \frac{1}{\cos\psi}\Bigg\vert
\end{eqnarray}
from the new corner of the internal vertex, where the new leaf is attached. Altogether, this gives \eqref{eqTreeABound2}
with the number of internal vertices $n_i$ and number of leaves $(n_l + 1)$.
\begin{figure}[!ht]
\begin{center}
{\includegraphics[width=14cm]{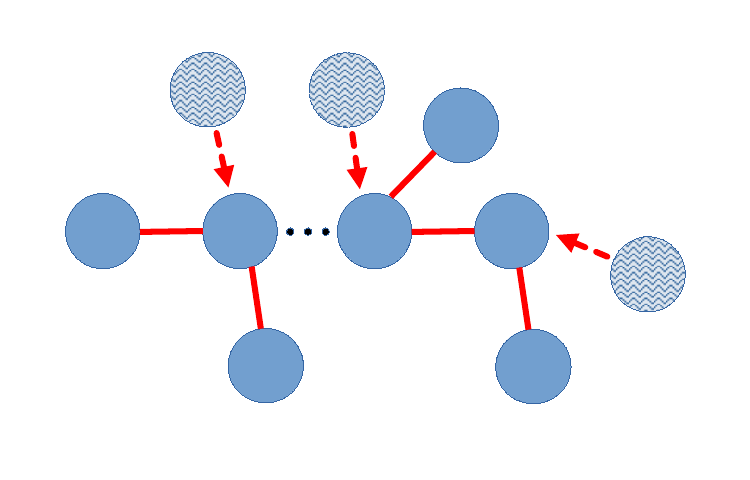}}
\end{center}
\caption{Possible ways to add a leaf to a tree.}
\label{proof2}
\end{figure}

In the second case, the number of vertices is augmented by increasing the number of the internal vertices.
Here, we first consider the case with $n_l = 2$, which corresponds to the chain of vertices.
Then, the insertion of the internal vertex 
brings the contribution from two corner operators of the new internal vertex, the same as \eqref{twocorners}. Together with the factor $N$ from the trace this gives \eqref{eqTreeABound2}
with the number of internal  $(n_i + 1)$ and number of leaves $n_l$.

When $n_l > 2$, there exists a tree $T(n_i + 1, n_l - 1)$ with $n_v$ vertices consisting of $(n_i + 1)$ internal vertices, and $(n_l - 1)$ leaves. According to the induction assumption, the bound \eqref{eqTreeABound2} is valid for this tree. Then, the tree of our current interest is obtained from $T(n_i + 1, n_l - 1)$ by inserting a leaf, and the insertion of a leaf was treated before.
\end{proof}

To prove the convergence of the series \eqref{eqLVE}, we split the sum over the LVE trees as
\begin{eqnarray}
    \sum_{T\,\atop\text{LVE tree}}  = \sum_{T \in \{T_1, T_2\}} + \sum_{2 < |V(T)| < 60} + \sum_{T_\geq} + \sum_{T_<}\,,
\label{sumsplit}
\end{eqnarray}
where the two terms of the first sum were treated before, the second sum runs over all LVE trees with $2 < |V(T)| < 60$ number of vertices, $T_\geq$ denotes the LVE trees which $\al |V(T)|$ leaves or more and $|V(T)| \geq 60$ vertices, and $T_<$ stands for the LVE trees with less than $\al |V(T)|$ leaves and $|V(T)| \geq 60$ vertices.

Let us now bound each sum of \eqref{sumsplit} separately. We start with the following lemma.
\begin{lemma}
\label{allbyConstLemma}
For any $x$, satisfying
\begin{eqnarray}
    x \geq x_2\,,\qquad x_2 = \frac{|\sqrt{\lmbd}| + 1}{|\sqrt{\lmbd}|}\,,
\label{x2cond}
\end{eqnarray}
one has
\begin{align}
\Bigg\vert \frac{\sqrt{\lmbd}}{a} \frac{1}{\cos\psi}\Bigg\vert \leq \Bigg\vert\frac{(1 - a)}{a \cos\psi} \Bigg\vert \leq \frac{\sqrt{2}}{\cos\psi} \,.
\label{twoIneq0}
\end{align}

\end{lemma}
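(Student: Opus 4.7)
The plan is to reduce both inequalities to elementary complex-geometric estimates on $|a|$ via triangle inequalities. Using the parametrization $a = x\sqrt{\lmbd}\, e^{\myI\psi}$ from \eqref{arep}, one has $|a| = x|\sqrt{\lmbd}|$, and since $|\psi|<\pi/2$ the factor $1/\cos\psi$ is positive and common to both sides of the chained inequalities, so it can be cleared. The statement thus reduces to proving $|\sqrt{\lmbd}| \le |1-a|$ (first inequality) and $|1-a| \le \sqrt{2}\,|a|$ (second inequality).

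For the first inequality, I would apply the reverse triangle inequality $|1-a|\ge |a|-1$, which is justified because the hypothesis $x\ge x_2=(|\sqrt{\lmbd}|+1)/|\sqrt{\lmbd}|$ forces $|a|=x|\sqrt{\lmbd}|\ge|\sqrt{\lmbd}|+1>1$. Substituting yields $|1-a|\ge(|\sqrt{\lmbd}|+1)-1=|\sqrt{\lmbd}|$, which is exactly what is needed.

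For the second inequality, the direct triangle inequality gives $|1-a|\le 1+|a|$, hence $|1-a|/|a|\le 1+1/|a|$. Using $|a|\ge|\sqrt{\lmbd}|+1$ once more, this upper bound becomes $1+1/(|\sqrt{\lmbd}|+1)$, which is at most $\sqrt{2}$ precisely when $|\sqrt{\lmbd}|\ge\sqrt{2}-1$ reorganised as $|\sqrt{\lmbd}|+1\ge 1/(\sqrt{2}-1)=\sqrt{2}+1$, i.e., $|\lmbd|\ge 2$. The main subtlety I foresee is that the \emph{explicit} constant $\sqrt{2}$ is not achievable for arbitrarily small $|\lmbd|$ from $x\ge x_2$ alone; however, this is compatible with the paper's strategy of invoking the variational LVE only in the extension domain ${\cal E}$ lying outside the cardioid $\cC$, where the standard LVE of \cite{Gurau:2014lua} already controls the small-$|\lmbd|$ regime. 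Modulo this regime point, the whole argument collapses to two one-line triangle-inequality calculations, the first quantitatively matching the threshold $x_2$ and the second asymptotic in $|a|\to\infty$.
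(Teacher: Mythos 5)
Your proof of the first inequality is correct and essentially the paper's argument: the reverse triangle inequality $|1-a|\ge |a|-1$ is exactly the paper's reduction to the worst case $\psi+\tfrac{\phi}{2}=0$, and the threshold $x_2$ comes out identically. The problem is the second inequality. You bound $|1-a|\le 1+|a|$ by the plain triangle inequality, which forces you to require $|\lmbd|\ge 2$, and your attempt to excuse this by deferring the small-$|\lmbd|$ regime to the standard LVE does not work: the lemma is invoked (via Lemmas \ref{FiniteSumLemma}, \ref{TgeqLemma}, \ref{TlessLemma}) for every $\lmbd$ in the extension domain ${\cal E}$, which contains points of arbitrarily small modulus with $|\arg\lmbd|$ near $\pi$ or beyond (e.g.\ $\lmbd=e^{\myI\pi}$ has $|\lmbd|=1<2$ and lies nowhere near the cardioid, since $\cC$ requires $4|\lmbd|<\cos^2(\phi/2)$). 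So the regime $|\lmbd|<2$ outside $\cC$ is genuinely needed and your argument leaves it uncovered.

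The missing idea is that the parametrization \eqref{arep} imposes $-\tfrac{\pi}{2}<\psi+\tfrac{\phi}{2}<\tfrac{\pi}{2}$, i.e.\ $\mathrm{Re}\,a>0$, and this makes the cross term in $|1-a|^2$ work \emph{for} you rather than against you:
\begin{equation*}
|1-a|^2 \;=\; 1-2|a|\cos\Big(\psi+\tfrac{\phi}{2}\Big)+|a|^2 \;\leq\; 1+|a|^2 \;\leq\; 2|a|^2
\quad\text{whenever } |a|=x|\sqrt{\lmbd}|\geq 1\,,
\end{equation*}
and $x\ge x_2=1+\tfrac{1}{|\sqrt{\lmbd}|}>\tfrac{1}{|\sqrt{\lmbd}|}$ guarantees $|a|>1$ for every $\lmbd\neq 0$. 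This is precisely the bound \eqref{twoIneq3} in the paper's proof, and it yields the constant $\sqrt{2}$ uniformly in $|\lmbd|$. With this replacement your argument closes; without it, the lemma as stated is not established.
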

\begin{proof}
Recall that $\sqrt{\lmbd} = e^{\text{i}\phi/2} |\sqrt\lmbd|$, $\lmbd \neq 0$, and that according to the representation \eqref{arep}, $a = x |\sqrt{\lmbd}| e^{\myI(\psi+\frac{\phi}{2})}$, $x > 0$, $-\frac{\pi}{2} < \psi  + \frac{\phi}{2} < \frac{\pi}{2}$.
Then, the first inequality in \eqref{twoIneq0} transforms to 
\begin{eqnarray}
    \frac{1}{x} \leq \Bigg\vert \frac{1 - |\sqrt{\lmbd}| x e^{\myI(\psi + \frac{\phi}{2})}}{|\sqrt{\lmbd}| x} \Bigg\vert\,,
\end{eqnarray}
and it is enough to satisfy it for $\psi + \frac{\phi}{2} = 0$. Simplifying the absolute value operation for $x \geq \frac{1}{|\sqrt{\lmbd}|}$, we arrive at \eqref{x2cond}. The second inequality in \eqref{twoIneq0} rewrites, as
\begin{align}
\Bigg\vert\frac{(1 - |\sqrt{\lmbd}| x e^{\myI(\psi + \frac{\phi}{2})})}{|\sqrt{\lmbd}| x \cos\psi} \Bigg\vert \leq \frac{\sqrt{2}}{\cos\psi} \,.
\label{twoIneq2}
\end{align} 
Using the upper bound for the left-hand side of \eqref{twoIneq2} (note that $-\frac{\pi}{2} < \psi  + \frac{\phi}{2} < \frac{\pi}{2}$), we find that it is enough to satisfy
\begin{align}
\Bigg\vert\frac{\sqrt{1 + |\lmbd| x^2}}{|\sqrt{\lmbd}|x} \Bigg\vert \leq \sqrt{2}\,,
\label{twoIneq3}
\end{align} 
what can be achieved for $x \geq \frac{1}{|\sqrt{\lmbd}|}$.
\end{proof}

\begin{lemma}
\label{FiniteSumLemma}
    For $|\psi| < \pi/2$, $\lambda \neq 0$ the sum of absolute values of trees amplitudes from the finite set of trees in \eqref{sumsplit} is bounded by a constant,
    \begin{eqnarray}
        \sum_{2 < |V(T)| < 60} |\cA_T| < const\,,
    \end{eqnarray}
where the $const \sim O(N^2)$.
\end{lemma}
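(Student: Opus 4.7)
The plan is to control the medium-size tree contributions by combining the corner estimate of Lemma \ref{cornerLemma} with the uniform bound of Lemma \ref{allbyConstLemma}, and then counting labeled trees of each fixed size via Cayley's formula. Because the range $3 \leq |V(T)| \leq 59$ is finite and each tree amplitude is individually bounded, the statement will essentially be a direct corollary of the preceding lemmas.

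First I would choose $x \geq \max\{x_1, x_2, 1/|\sqrt{\lmbd}|\}$, so that both \eqref{xineq1} (required for the simplified leaf bound \eqref{doubleBound}) and \eqref{x2cond} hold simultaneously. Under this choice Lemma \ref{allbyConstLemma} guarantees
$$\Bigl|\tfrac{\sqrt{\lmbd}}{a}\tfrac{1}{\cos\psi}\Bigr| \leq \tfrac{\sqrt{2}}{\cos\psi}, \qquad \Bigl|\tfrac{3(1-a)}{2a\cos\psi}\Bigr| \leq \tfrac{3\sqrt{2}}{2\cos\psi}.$$
Setting $K := 3\sqrt{2}/(2\cos\psi)$, which is at least $1$ for every $\psi \in (-\pi/2,\pi/2)$, Lemma \ref{cornerLemma} then yields, for any tree $T$ with $n=|V(T)|$ vertices split into $n_i$ internal ones and $n_l$ leaves,
$$\Bigl|\Tr\Bigl[\textstyle\overrightarrow{\prod}_c \cC_c(i_c)\Bigr]\Bigr| \leq N\, K^{2n_i-2+n_l} \leq N\, K^{2n-2},$$
where the second inequality uses $n_i + n_l = n$ and $K\geq 1$. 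This bound is uniform in the field $A$, so integrating against the normalized Gaussian measure $d\mu_{C_T}(A)$ contributes nothing extra.

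Next I would assemble the full tree amplitude from \eqref{eqLVE}. Since $|V(T)| - |E(T)| = 1$ for every tree and the $t$-integrations run over $[0,1]^{|E(T)|}$, one obtains
$$|\cA_T[\lmbd,N]| \leq \frac{N^{2}}{n!}\, K^{2n-2}.$$
Cayley's formula supplies $n^{n-2}$ labeled trees on $n$ vertices, hence
$$\sum_{|V(T)|=n} |\cA_T| \;\leq\; \frac{n^{n-2}}{n!}\, N^{2}\, K^{2n-2},$$
and summing over the finite range $3 \leq n \leq 59$ gives
$$\sum_{2 < |V(T)| < 60} |\cA_T| \;\leq\; N^{2} \sum_{n=3}^{59}\frac{n^{n-2}}{n!}\, K^{2n-2},$$
which is a constant of order $N^{2}$ depending only on $\lmbd$ and $\psi$.

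There is no real obstacle in this argument: the only minor point is the passage from the pointwise corner estimate to the integrated amplitude, which is immediate from the operator-norm form of Lemma \ref{boundresolvent}. Crucially, the truncation $|V(T)| < 60$ turns the otherwise dangerous combinatorial factor $n^{n-2}/n!$ into a harmless finite sum, so the ``divergent tails'' of the LVE are not seen here and must be treated by the subsequent bounds on $T_\geq$ and $T_<$ in the decomposition \eqref{sumsplit}.
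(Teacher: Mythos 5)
Your proof is correct and follows essentially the same route as the paper, whose own proof is just the one-line remark that the bound ``follows from Lemmas \ref{cornerLemma} and \ref{allbyConstLemma}''; you have simply made explicit the intermediate steps (uniform corner bound $K=3\sqrt{2}/(2\cos\psi)$, the prefactor $N^{|V(T)|-|E(T)|}/|V(T)|!=N/n!$, Cayley's count $n^{n-2}$, and the finiteness of the range $3\le n\le 59$). No gaps.
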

\begin{proof}
    Follows from lemmas \ref{cornerLemma} and \ref{allbyConstLemma}.
\end{proof}

Now we can estimate how many trees have more than $\al |V(T)|$ leaves (we will be interested in $1/2 < \al < 1$).
\begin{lemma}
\label{LeavesLemma}
The number of trees with $|V(T)|$ vertices and $\al |V(T)|$ or more leaves with $1/2 < \al < 1$, $|T_\geq|$ is bounded by
\begin{eqnarray}
    |T_\geq| &\leq&  
    (|V(T)| - \ceil*{\al |V(T)|}  + 1)
    \big(|V(T)|!\big) \nonumber\\
    &\times& 2^{|V(T)|-3} e^{\ceil*{\al |V(T)|}} \bigg(\frac{1 - \al}{\al}\bigg)^{\ceil*{\al |V(T)|}}\,.
\label{alphbound}
\end{eqnarray}
\end{lemma}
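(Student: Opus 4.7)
The plan is to count labeled trees by their Pr\"ufer sequences, reduce $|T_\geq|$ to a single ($\ell=\lceil\alpha n\rceil$) term via a monotonicity argument, and then extract the factored form of \eqref{alphbound} through Stirling-type inequalities. The hypothesis $\alpha>1/2$ drives the monotonicity, while the size bound $|V(T)|\geq 60$ built into the splitting \eqref{sumsplit} lets us replace $e^{n-k}$ by $2^{n-3}$ with the stated constant.

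Write $n=|V(T)|$, $k=\lceil\alpha n\rceil$, $m=n-k$, and let $T(n,\ell)$ denote the number of labeled trees on $n$ vertices with exactly $\ell$ leaves. First I would invoke the Pr\"ufer bijection: labeled trees on $\{1,\dots,n\}$ correspond to sequences $(p_{1},\dots,p_{n-2})\in\{1,\dots,n\}^{n-2}$, and a vertex is a leaf iff it is absent from the sequence. Counting Pr\"ufer sequences that use a prescribed set of $n-\ell$ non-leaf labels gives $T(n,\ell)=\binom{n}{\ell}(n-\ell)!\,S(n-2,n-\ell)$ with $S$ a Stirling number of the second kind, and relaxing surjections to arbitrary maps yields the clean upper bound $T(n,\ell)\leq\binom{n}{\ell}(n-\ell)^{n-2}=:U(\ell)$.

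Next I would show that $U(\ell)$ is decreasing on $\{k,\dots,n-1\}$. A direct computation gives
$$\frac{U(\ell+1)}{U(\ell)}=\frac{n-\ell}{\ell+1}\left(1-\frac{1}{n-\ell}\right)^{n-2},$$
which is itself a decreasing function of $\ell$, and at $\ell=k$ its first factor $m/(k+1)$ is $<1$ as soon as $\alpha>1/2$. Hence the ratio is $<1$ throughout the range, and bounding the $m$-term sum by $(m+1)$ times the leading term yields
$$|T_\geq|\;\leq\;\sum_{\ell=k}^{n-1}U(\ell)\;\leq\;(m+1)\,U(k)\;=\;(m+1)\binom{n}{k}(n-k)^{n-2}.$$

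Finally I would convert this into \eqref{alphbound}. Using $\binom{n}{k}=n!/(k!\,m!)$ together with $j!\geq(j/e)^{j}$ for $j\in\{k,m\}$ gives
$$(m+1)\binom{n}{k}(n-k)^{n-2}\;\leq\;(m+1)\,\frac{n!\,e^{n}}{m^{2}}\left(\frac{m}{k}\right)^{k}.$$
Since $m/k\leq(1-\alpha)/\alpha$, $m^{2}\geq 1$, and $e^{n}=e^{k}e^{n-k}$, the claim reduces to the numerical inequality $e^{n-k}\leq 2^{n-3}$, equivalently $n-k\leq(n-3)\ln 2$. The hypothesis $\alpha>1/2$ forces $n-k<n/2$, and $n/2\leq(n-3)\ln 2$ already for $n\geq 11$, so it certainly holds throughout the regime $|V(T)|\geq 60$ of \eqref{sumsplit}. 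The edge case $m=0$ (which occurs only for $\alpha>1-1/n$) is trivial since then $|T_\geq|=0$.

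\textbf{Main obstacle.} The combinatorial step (Pr\"ufer counting) is textbook; the delicate part is the numerical bookkeeping. One has to verify both the monotonicity of $U(\ell)$ and the inequality $e^{n-k}\leq 2^{n-3}$ precisely in the regime $\alpha>1/2$, $|V(T)|\geq 60$ of \eqref{sumsplit}, and to track constants so that the final prefactor matches $(m+1)\,n!\,2^{n-3}$ rather than something larger. These are minor calculations, but they determine the cutoff value $60$ and so have to be done carefully.
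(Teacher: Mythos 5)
Your proof is correct, and its skeleton matches the paper's: both start from the exact count $\mathcal{N}(n,\ell)=\frac{n!}{\ell!}S(n-2,n-\ell)=\binom{n}{\ell}(n-\ell)!\,S(n-2,n-\ell)$ (the paper simply quotes this formula, you derive it via Pr\"ufer sequences), bound each term by something decreasing in $\ell$, and absorb the sum over leaf counts into the prefactor $(m+1)$. The genuine difference is where the factor $2^{|V(T)|-3}$ comes from. The paper keeps the Stirling number and uses $S(n-2,n-k)\leq\binom{n-3}{n-k-1}(n-k)^{k}\leq 2^{n-3}(n-k)^{k}$, so the stated inequality holds for every $n$; you instead throw away surjectivity ($(n-\ell)!\,S(n-2,n-\ell)\leq(n-\ell)^{n-2}$), pay for it with $\frac{e^{k}e^{m}}{k^{k}m^{m}}$ from the factorial lower bounds, and recover $2^{n-3}$ only through $e^{m}\leq 2^{n-3}$, which needs $n\geq 11$. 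That is harmless here, since $T_\geq$ is defined with $|V(T)|\geq 60$, but strictly speaking your argument establishes the displayed inequality only on that range, not as a free-standing statement for all $n$ — worth one sentence of acknowledgment. On the other hand, your monotonicity step is tighter than the paper's: the paper justifies the $(m+1)$ prefactor by observing that its final bound decreases as $\alpha$ increases (invoking $\alpha>e/(1+e)$ so that $e(1-\alpha)/\alpha<1$), whereas your ratio computation $\frac{U(\ell+1)}{U(\ell)}=\frac{n-\ell}{\ell+1}\bigl(1-\frac{1}{n-\ell}\bigr)^{n-2}<1$ shows directly that $\alpha>1/2$ — the hypothesis actually stated in the lemma — already forces the termwise bound to decrease in $\ell$. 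Your treatment of the edge case $m=0$ and the identity $n-2-m=k-2$ are both correct.
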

\begin{proof}
The number of labeled trees with $n$ vertices and $k$ leaves is given by
\begin{eqnarray}
    {\cal N}(n, k) = \frac{n!}{k!} S(n-2, n-k)\,,
\end{eqnarray}
where $S(n-2, n-k)$ is the Stirling number of the second type.
We can bound ${\cal N}(n, k)$ using an upper bound for the Stirling's numbers of the second type,
\begin{eqnarray}
    S(n-2, n-k) \leq 
    \binom{n-3}{n-k-1} (n - k)^{k} \leq 2^{n-3} (n - k)^{k}\,,
\end{eqnarray}
and a lower bound for the Gamma function
\begin{eqnarray}
    k! \geq e^{-k} (k+1)^k\,.
\end{eqnarray}
Thus, we have
\begin{eqnarray}
    {\cal N}(n, k) \leq n!\frac{e^k}{k^k} 2^{n-3} (n-k)^k\,.
\end{eqnarray}
Applying it to $n = |V(T)|$, $k = \ceil*{\al |V(T)|}$ -- an integer number larger or equal to $\al |V(T)|$, we obtain
\begin{eqnarray}
    {\cal N}(|V(T)|, \ceil*{\al |V(T)|}) \leq \big(|V(T)|!\big) 2^{|V(T)|-3} e^{\ceil*{\al |V(T)|}} \bigg(\frac{1 - \al}{\al}\bigg)^{\ceil*{\al |V(T)|}}\,,
\label{NalphaB}
\end{eqnarray}
where we have used an obvious relation
\begin{eqnarray}
    \frac{|V(T)| - \ceil*{\al |V(T)|}}{\ceil*{\al |V(T)|}} \leq \frac{1 - \al}{\al}\,.
\end{eqnarray}
If we take $\frac{e}{1 + e} < \al < 1$, then $e\frac{1 - \al}{\al} < 1$,
and the expression decreases with increasing of $\al$ leading to decreasing of the bound \eqref{NalphaB}.
Therefore for large values of $\al$ (for larger amounts of leaves), we will have fewer trees. Consequently, we can bound the number of trees that have $\al |V(T)|$ leaves or more just by multiplying the \eqref{NalphaB} by $(|V(T)| - \ceil*{\al |V(T)|}  + 1)$.
\end{proof}

Now we are ready to bound the sum of amplitudes of all trees $T_\geq$ with $\al |V(T)|$ leaves or more. 
\begin{lemma}
\label{TgeqLemma}
For $|\psi| < \pi/4$ and $\al = \frac{59}{60}$, the sum of absolute values of amplitudes of the trees $T_\geq$ is smaller than the sum of an absolutely convergent series,
\begin{eqnarray}
    \sum_{T_\geq} \big\vert \cA_{T_\geq}\big \vert \leq \frac{N^2 e}{8}\sum_{v=60}^\infty (\frac{1}{60} v  + 1)\, \Bigg(\frac{7}{8}\Bigg)^{v}\,.
\end{eqnarray}
\end{lemma}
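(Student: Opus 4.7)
The strategy is to combine the three preceding lemmas so that the factorial $1/|V(T)|!$ in the tree amplitude cancels against the $|V(T)|!$ counting factor in Lemma \ref{LeavesLemma}, leaving a geometric series whose common ratio is strictly below $7/8$. The leverage comes from the fact that trees in $T_\geq$ have almost all vertices equal to leaves, so the potentially large internal-corner factor $|\sqrt\lmbd/(a\cos\psi)|^{2n_i-2}$ is forced to a tiny exponent, while the extra smallness produced by the combinatorial constraint (reflected in the $(e/59)^{\lceil\al v\rceil}$ in Lemma \ref{LeavesLemma}) will dominate the leaf-corner factor.

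First, fix $x \geq x_2$ as in Lemma \ref{allbyConstLemma} so that both $|\sqrt\lmbd/(a\cos\psi)|$ and $|(1-a)/(a\cos\psi)|$ are bounded by $\sqrt 2/\cos\psi$, and restrict to $|\psi|<\pi/4$ so $\cos\psi>\sqrt 2/2$ and hence this common bound is strictly less than $2$. Apply Lemma \ref{cornerLemma} to $\cA_T$ from \eqref{eqLVE}. Using that the $t_e$-integrations over $[0,1]$ each contribute at most $1$, that $d\mu_{C_T}$ is normalized, and that $|E(T)|=v-1$ so the explicit $N^{|V(T)|-|E(T)|}=N$ combines with the $N$ from Lemma \ref{cornerLemma} into an overall $N^2$, one obtains
\begin{equation*}
|\cA_T| \leq \frac{N^2}{v!}\,4^{n_i-1}\,3^{n_l}
=\frac{N^2}{4\,v!}\,3^v\,(4/3)^{n_i},
\qquad v=|V(T)|.
\end{equation*}
For $T \in T_\geq$ one has $n_i \leq v - \lceil \al v\rceil \leq v/60 + 1$ with $\al=59/60$, so the per-tree bound becomes $|\cA_T|\leq (N^2/(3\,v!))\cdot 3^v (4/3)^{v/60}$.

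Next, invoke Lemma \ref{LeavesLemma} with $\al=59/60$, bounding $\lceil\al v\rceil\geq 59v/60$ in the factor $(e/59)^{\lceil\al v\rceil}$ (legal since $e/59<1$):
\begin{equation*}
|T_\geq(v)| \leq (v/60+1)\,v!\,2^{v-3}\,(e/59)^{59v/60}.
\end{equation*}
Multiplying the per-tree bound by $|T_\geq(v)|$ the factorials cancel and we get
\begin{equation*}
\sum_{T_\geq}|\cA_T| \leq \frac{N^2}{24}\sum_{v=60}^\infty
(v/60+1)\,\rho^v,
\qquad \rho = 6\,(e/59)^{59/60}(4/3)^{1/60}.
\end{equation*}
A direct numerical check gives $\rho \approx 0.293 < 7/8$, and since $1/24 \leq e/8$, the series on the right is dominated term-by-term by $(N^2 e/8)(v/60+1)(7/8)^v$, yielding the stated bound.

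The only real obstacle is the bookkeeping that makes the constants line up: the corner bounds alone grow like $4^{n_i}\,3^{n_l}$, which at fixed $v$ is as large as $4^{v-1}$ in the worst combinatorial case, while the tree count grows like $v!\,2^v$; if these were combined blindly the factorial would not cancel to a convergent tail. The trick is that the restriction $n_l\geq \al v$ simultaneously (i) collapses the $(4/3)^{n_i}$ factor to a subexponential $(4/3)^{v/60}$ and (ii) activates the small factor $(e/59)^{\lceil\al v\rceil}\sim (0.0486)^v$, which single-handedly beats the $6^v=2^v\cdot 3^v$ coming from the surviving tree-count and leaf contributions. This forced smallness of the ratio $\rho$ below $1$ is exactly why the truncation parameters $|V(T)|\geq 60$ and $\al=59/60$ are chosen the way they are.
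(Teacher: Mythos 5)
Your proof is correct and follows essentially the same route as the paper: apply Lemma \ref{cornerLemma} together with Lemma \ref{allbyConstLemma} to the corner operators, cancel the $1/|V(T)|!$ against the $|V(T)|!$ in Lemma \ref{LeavesLemma}, and check that the resulting geometric ratio is below $7/8$ (your bookkeeping, which keeps $4^{n_i}$ and $3^{n_l}$ separate and exploits $n_i\leq v/60$, actually yields a smaller ratio $\approx 0.29$ than the paper's uniform bound $\big(3\sqrt{2}/(2\cos\psi)\big)^{2v}$, which gives $\approx 0.873$). One hypothesis slip: you fix only $x\geq x_2$, but the leaf factor $\big\vert 3(1-a)/(2a\cos\psi)\big\vert^{n_l}$ in Lemma \ref{cornerLemma} rests on \eqref{doubleBound}, which requires \eqref{xineq1}, i.e.\ $x\geq x_1$ as well; you should take $x\geq x_3=\max\{x_1,x_2\}$ as the paper does (harmless, since $x$ is a free variational parameter later fixed at $x_4>x_3$).
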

\begin{proof}
First of all, we chose $x \geq x_3$, $x_3 = \max\{x_1, x_2\}$.
Then, according to lemmas \ref{cornerLemma} and \ref{allbyConstLemma}, the contributions of the corner operators of each tree from $T_\geq$ can be bounded, as
\begin{align}
\bigg|
\Tr\Big[
\mathop{\overrightarrow{\prod}}\limits_{c\in \partial T_\geq\,\text{corner}}
\cC_c(i_c)\Big]
\bigg|
\leq N  \Bigg(\frac{3\sqrt{2}}{2\cos\psi}\Bigg)^{2 n_i + n_l - 2} \leq N  \Bigg(\frac{3\sqrt{2}}{2\cos\psi}\Bigg)^{2 |V(T)|}\, .
\label{eqTreeABoundTmanyLeaves}
\end{align} 
Taking into account the number of trees in $T_\geq$ at each order of LVE by lemma \ref{LeavesLemma}, we obtain
\begin{eqnarray}
    \sum_{T_\geq} \big\vert \cA_{T_\geq}\big \vert &\leq&
    \frac{N^2}{8}\sum_{v=60}^\infty (v - \ceil*{\al v}  + 1)\, 2^v e^{\ceil*{\al v}} \bigg|\frac{1 - \al}{\al}\bigg|^{\ceil*{\al v}} \Bigg(\frac{3\sqrt{2}}{2\cos\psi}\Bigg)^{2 v}\nonumber\\
    &\leq&
    \frac{N^2}{8}\sum_{v=60}^\infty (v - \al v  + 1)\, e^{\al v + 1} \bigg|\frac{1 - \al}{\al}\bigg|^{\al v} \Bigg(\frac{3}{\cos\psi}\Bigg)^{2 v}\,.
\end{eqnarray}
Now let us take $\al$ in such a way that
\begin{eqnarray}
    \Bigg(\frac{3}{\cos\psi}\Bigg)^2 \, e^{\al} \Bigg\vert\frac{1 - \al}{\al}\Bigg\vert^{\al} < 1\,.
\label{alphacos}
\end{eqnarray}
It is easy to see that, for any value of the $\cos\psi$, one can choose such $\al$'s, sufficiently close to $\al = 1$, that the inequality \eqref{alphacos} is satisfied. However, to handle our bounds we need to fix a certain constant value of $\al$. For this, we need to additionally restrict values of $\psi$. For simplicity, hereafter we consider $|\psi| < \pi/4$. Then, it is enough to find $\al$'s satisfying
\begin{eqnarray}
    18 \, e^{\al} \Bigg\vert\frac{1 - \al}{\al}\Bigg\vert^{\al} < 1\,.
\label{alphaE}
\end{eqnarray}
If, for instance, we take $\al = \frac{59}{60}$, the left-hand side of \eqref{alphaE} $\approx 0.873 < \frac{7}{8}$ (note that $\frac{59}{60} > \frac{e}{1 + e}$). 
This completes the proof.
\end{proof}

\begin{lemma}
\label{TlessLemma}
For $|\psi| < \pi/4$ and $\al = \frac{59}{60}$, the sum of absolute values of amplitudes of the trees $T_<$ is smaller than the sum of an absolutely convergent series,
\begin{eqnarray}
    \sum_{T_<} \big\vert \cA_{T_<}\big \vert &\leq&
    N^2 \frac{2 (x_4 \cos\psi)^5 \sqrt{\lmbd}}{3 (x_4\sqrt{\lmbd} - 1)}\sum_{v=60}^\infty \frac{1}{v^2}\Bigg(\frac{1}{2}\Bigg)^{v}\,,
\label{TlessBound}
\end{eqnarray}
where
\begin{eqnarray}
    x_4 = \max\{x_3, \frac{2^{30}e^{30}}{\cos\psi} \Big(\frac{3\sqrt{2}}{2\cos\psi}\Big)^{59/2}\} + 1\,.
\label{x4}
\end{eqnarray}
\end{lemma}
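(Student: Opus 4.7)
The plan is to combine the per-tree amplitude bound from Lemma \ref{cornerLemma} with Cayley's enumeration of labeled trees, and then tune the free parameter $x$ large enough that the resulting sum is a geometric series in the tree size $v = |V(T)|$.

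First, I would apply Lemma \ref{cornerLemma} to each $T \in T_<$. Using the parametrization \eqref{arep} one has $|\sqrt{\lmbd}/a| = 1/x$, and Lemma \ref{allbyConstLemma} (valid for $x \geq x_2$) gives $|3(1-a)/(2a\cos\psi)| \leq 3\sqrt{2}/(2\cos\psi)$. Together with the prefactor $1/v!$ from \eqref{eqLVE} and the $N^2$ assembled from the $N$ attached to the trace and the $N^{|V(T)|-|E(T)|}=N$ for a tree, each tree with $n_l$ leaves and $n_i = v - n_l$ internal vertices satisfies
\begin{equation*}
|\cA_T| \leq \frac{N^2}{v!}\left(\frac{1}{x\cos\psi}\right)^{2n_i - 2}\left(\frac{3\sqrt{2}}{2\cos\psi}\right)^{n_l}.
\end{equation*}
Rewriting this as $N^2(x\cos\psi)^2/v!$ times $(x\cos\psi)^{-2v}[\tfrac{3\sqrt{2}}{2}x^2\cos\psi]^{n_l}$, one sees that for $x$ large enough that $\tfrac{3\sqrt{2}}{2}x^2\cos\psi > 1$ the bound is monotone increasing in $n_l$, so its maximum over the allowed range $n_l < \al v$ is attained at the boundary $n_l = \al v$, $n_i = (1-\al)v$.

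Next I would enumerate. By Cayley's formula there are at most $v^{v-2}$ labeled trees on $v$ vertices, and Stirling gives $v^{v-2}/v! \leq e^v/(v^2\sqrt{2\pi v})$. Multiplying by the worst-case per-tree bound, all $v$-dependence is packaged into $v^{-5/2}$ times the $v$-th power of
\begin{equation*}
q(x,\psi)\;=\;e\left(\frac{1}{x\cos\psi}\right)^{2(1-\al)}\left(\frac{3\sqrt{2}}{2\cos\psi}\right)^{\al}.
\end{equation*}
With $\al = 59/60$ one has $2(1-\al) = 1/30$ and $30\al = 59/2$, so the condition $q \leq 1/2$ reduces to $x\cos\psi \geq (2e)^{30}(3\sqrt{2}/(2\cos\psi))^{59/2}$, which for $|\psi|<\pi/4$ is exactly the threshold baked into $x_4$ in \eqref{x4} (with the extra $+1$ slack ensuring strict inequality and the dominance of the other constraints $x_1, x_2$). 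Summing over $v \geq 60$ yields $\sum v^{-5/2}(1/2)^v \leq \sum v^{-2}(1/2)^v$, and collecting the remaining $v$-independent constants (the $(x_4\cos\psi)^2$ from the $-2$ in $2n_i - 2$, and the $\sqrt{\lmbd}$-dependent factors inherited from \eqref{arep} through the identity $1/|\sqrt{\lmbd}| \leq x_4\sqrt{\lmbd}/(x_4\sqrt{\lmbd}-1)$) reproduces the stated prefactor $2(x_4\cos\psi)^5\sqrt{\lmbd}/(3(x_4\sqrt{\lmbd}-1))$.

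The main obstacle is justifying the worst-case identification $n_l = \al v$, which requires $x$ to already lie past the $x_3 = \max\{x_1,x_2\}$ threshold so that the leaf factor dominates the internal-vertex factor; this is precisely what forces $x_4 \geq x_3$ in \eqref{x4}. Once the monotonicity and the choice of $\al = 59/60$ are in place, counting trees by Cayley--Stirling and the geometric summation are routine, and the only remaining effort is the bookkeeping needed to arrange the prefactor in the form displayed in \eqref{TlessBound}.
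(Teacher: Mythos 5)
Your proposal is correct and follows essentially the same route as the paper: bound each tree via Lemma \ref{cornerLemma} at the worst-case leaf count $n_l \approx \al |V(T)|$ (justified by the monotonicity in $n_l$ coming from Lemma \ref{allbyConstLemma}, which is why $x_4 \geq x_3$), count trees by Cayley's formula with $v^{v-2}/v! \leq e^v/v^2$, and choose $x$ past the threshold $x\cos\psi \geq (2e)^{30}\big(\tfrac{3\sqrt{2}}{2\cos\psi}\big)^{59/2}$ so the per-vertex factor drops below $1/(2e)$ and the sum becomes geometric. The only deviations are cosmetic bookkeeping (the paper uses $\lceil \al v\rceil - 1$ leaves and slackens the internal-vertex exponent to $\tfrac{v}{30}-4$, which is where the $(x_4\cos\psi)^5$ in the prefactor comes from), so no substantive gap.
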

\begin{proof}
First of all let us note that $x_4 > x3 \geq x_2 > \frac{1}{|\sqrt{\lmbd}|}$, so \eqref{TlessBound} is well defined. Recall that, for $x \geq x_3$,
\begin{eqnarray}
    \Bigg\vert \frac{\sqrt{\lmbd}}{a} \frac{1}{\cos\psi}\Bigg\vert < \Bigg\vert\frac{(1 - a)}{a \cos\psi}\Bigg\vert < \frac{\sqrt{2}}{\cos\psi}\,.
\end{eqnarray}
Taking this into account, we conclude that the corner operators of trees with less than $\al |V(T)|$ leaves, with $\al = \frac{59}{60}$, are bounded by the bounds for corner operators of trees from $T_<$ with the maximal possible amount of leaves $(\ceil*{\frac{59}{60} |V(T)|} - 1)$.
Note that for any $x > \max\{x_3, \frac{2^{30}e^{30}}{\cos\psi} \Big(\frac{3\sqrt{2}}{2\cos\psi}\Big)^{59/2}\}$, we have
\begin{eqnarray}
    e \Bigg\vert \frac{\sqrt{\lmbd}}{a} \frac{1}{\cos\psi}\Bigg\vert^{\frac{1}{30}} 
    \bigg(\frac{3\sqrt{2}}{2\cos\psi}\bigg)^{\frac{59}{60}} < \frac{1}{2}\,.
\end{eqnarray}
Therefore, fixing $x_4$ as in \eqref{x4}, according to the lemma \ref{cornerLemma}, we obtain
\begin{eqnarray}
&&\bigg|
\Tr\Big[
\mathop{\overrightarrow{\prod}}\limits_{c\in \partial T_{n_l, n_i}\,\text{corner}}
\cC_c(i_c)\Big]
\bigg|
\leq \nonumber\\
&&N  \Bigg\vert \frac{\sqrt{\lmbd}}{a} \frac{1}{\cos\psi}\Bigg\vert^{2 |V(T)| - 2 \ceil*{\frac{59}{60} |V(T)|} - 4} \Bigg\vert\frac{3(1 - a)}{2a \cos\psi}\Bigg\vert^{\ceil*{\frac{59}{60} |V(T)|} - 1} \leq
\nonumber\\
&& N  \Bigg\vert \frac{\sqrt{\lmbd}}{a} \frac{1}{\cos\psi}\Bigg\vert^{\frac{1}{30} |V(T)| - 4} \Bigg\vert\frac{3\sqrt{2}}{2\cos\psi}\Bigg\vert^{\frac{59}{60} |V(T)|} \leq
\nonumber\\
&&\frac{2 (x_4 \cos\psi)^5 \sqrt{\lmbd}}{3 (x_4\sqrt{\lmbd} - 1)}\frac{N}{e^{|V(T)|}} \Big(\frac{1}{2}\Big)^{|V(T)|} 
\, .
\label{eqTreeABound5}
\end{eqnarray}

The number of all LVE trees with $n$ vertices is given by $n^{n-2}$, this amount can be employed as an upper bound for the number of trees $T_<$.
Using the lower bound for the factorial, we obtain
\begin{eqnarray}
    |T_<| \leq \frac{n^{n-2}}{n!} \leq \frac{n^{n-2}}{\Big(\frac{n+1}{e}\Big)^n} \leq \frac{e^n}{n^2}\,.
\label{TlessNumber}
\end{eqnarray}
Combining \eqref{eqTreeABound5} with \eqref{TlessNumber}, and \eqref{eqLVE}, we obtain \eqref{TlessBound}.
\end{proof}
\section{Proof of the Theorem 1}
\begin{proof}
In lemmas \ref{trivialTreeLemma}, \ref{T2Lemma}, \ref{FiniteSumLemma}, \ref{TgeqLemma}, and \ref{TlessLemma} we the proved absolute and uniform in $N$ convergence of the series \eqref{eqLVE}. The terms of \eqref{eqLVE} are analytic functions of $\lambda$, if $\lambda \neq 0$, for any fixed values of $x \geq x_4$ and $|\psi| < \frac{\pi}{4}$ from the representation of the variational parameter $a = x \sqrt{\lambda} e^{i\psi}$. Let us now show that the value $x \geq x_4$ can be chosen independently of $\lambda$. Recalling definitions of $x_1$, $x_2$, $x_3$, we can rewrite $x_4$, as
\begin{eqnarray}
    x_4 = &&\max\Bigg\{\frac{\cos\psi + \sqrt{\cos^2\psi + 8 |\lambda|\cos\psi}}{2 |\sqrt{\lambda}|\cos\psi}, \nonumber\\&&\frac{|\sqrt{\lambda}| + 1}{|\sqrt{\lambda}|}, \frac{2^{30}e^{30}}{\cos\psi} \Big(\frac{3\sqrt{2}}{2\cos\psi}\Big)^{59/2}\Bigg\} + 1\,.
\end{eqnarray}
Obviously, $x_4$ is bounded for $|\lambda| > 0$ and $|\psi| < \pi/4$. Therefore, it is enough to consider the variational parameter $a$ with the fixed value of $x = x_5$, where
\begin{equation}
    x_5 = \max_{|\lambda|, \psi} x_4\,.
\end{equation}

However, we cannot choose the unique value of $\psi$ providing the convergence of \eqref{eqLVE} for all $\phi = \arg \lambda$ in the region $|\phi| < \frac{3\pi}{2}$.
Taking $\psi = 0$, we obtain the absolute and uniform in $N$ convergence of \eqref{eqLVE} and analyticity of terms of \eqref{eqLVE} for $|\phi| < \frac{\pi}{2}$ and $|\lambda| > 0$. Due to the uniqueness of analytic continuation and the overlap with the cardioid domain of the analyticity of the free energy, $F[\lambda, N]$, proven in \cite{Gurau:2014lua}, we conclude that the series \eqref{eqLVE} represents the analytic continuation of the free energy for $|\phi| < \frac{\pi}{2}$ and $|\lambda| > 0$.

Taking $\psi = \pm \psi^*$, $0 < \psi^* < \frac{\pi}{2}$, leads to the convergence of \eqref{eqLVE} and to the analyticity of its terms for $-\frac{\pi}{2} + 2\psi < \phi < \frac{\pi}{2} + 2\psi$, $|\lambda| > 0$, which has at least one line overlap with $|\phi| < \frac{\pi}{2}$, $|\lambda| > 0$. Thus we obtain the analytic continuation of the free energy for $|\phi| < \frac{\pi}{2}$ and $|\lambda| > 0$.
\end{proof}

\end{document}